\theoremstyle{plain}
\numberwithin{equation}{section}
\newtheorem{thm}{Theorem}[section]
\newtheorem{lem}[thm]{Lemma}
\newtheorem{cor}[thm]{Corollary}
\newenvironment{exam}[1]%
{\begin{flushleft}\textbf{Example #1}.\enspace}%
{\end{flushleft}}
\newcommand{\complex}{{\mathbb C}}
\newcommand{\positive}{{\mathbb N}}
\newcommand{\real}{{\mathbb R}}
\newcommand{\ascript}{{\mathcal A}}
\newcommand{\cscript}{{\mathcal C}}
\newcommand{\dscript}{{\mathcal D}}
\newcommand{\pscript}{{\mathcal P}}
\newcommand{\rscript}{{\mathcal R}}
\newcommand{\sscript}{{\mathcal S}}
\newcommand{\tscript}{{\mathcal T}}
\newcommand{\itim}{\mathop{\mathit{Im}}}
\newcommand{\itre}{\mathop{\mathit{Re}}}
\newcommand{\rmcyl}{\mathop{\mathrm{cyl}}}
\newcommand{\dscripthat}{\widehat{\dscript}}
\newcommand{\rscripthat}{\widehat{\rscript}}
\newcommand{\tscripthat}{\widehat{\tscript}}
\newcommand{\dbar}{\overline{D}}
\newcommand{\atilde}{\widetilde{a}}
\newcommand{\ab}[1]{\left|#1\right|}
\newcommand{\brac}[1]{\left\{#1\right\}}
\newcommand{\paren}[1]{\left(#1\right)}
\newcommand{\sqbrac}[1]{\left[#1\right]}
\newcommand{\elbows}[1]{{\left\langle#1\right\rangle}}
\newcommand{\ket}[1]{{\left|#1\right>}}
\newcommand{\bra}[1]{{\left<#1\right|}}
\begin{document}

\title{AN APPROACH TO\\DISCRETE QUANTUM GRAVITY
}
\author{S. Gudder\\ Department of Mathematics\\
University of Denver\\ Denver, Colorado 80208, U.S.A.\\
sgudder@du.edu
}
\date{}
\maketitle

\begin{abstract}
This article presents a simplified version of the author's previous work. We first construct a causal growth process (CGP). We then form path Hilbert spaces using paths of varying lengths in the CGP. A sequence of positive operators on these Hilbert spaces that satisfy certain normalization and consistency conditions is called a quantum sequential growth process (QSGP). The operators of a QSGP are employed to define natural decoherence functionals and quantum measures. These quantum measures are extended to a single quantum measure defined on a suitable collection of subsets of a space of all paths. Continuing our general formalism, we define curvature operators and a discrete analogue of Einstein's field equations on the Hilbert space of causal sets. We next present a method for constructing a QSGP using an amplitude process (AP). We then consider a specific AP that employs a discrete analogue of a quantum action. Finally, we consider the special case in which the QSGP is classical. It is pointed out that this formalism not only gives a discrete version of general relativity, there is also emerging a discrete analogue of quantum field theory. We therefore have discrete versions of these two theories within one unifying framework.
\end{abstract}

\section{Introduction}  
In some previous articles, the author developed a model for discrete quantum gravity by first constructing a classical sequential growth process (CSGP) \cite{gud111,gud112,gud13,hen09,rs00,sor94,sor03,sur11,vr06}. Roughly speaking, the CSGP corresponded to the classical configuration space for a system of multi-universes. As is frequently done in quantum theory, we then quantized the CSGP to form a Hilbert space. The role of a quantum state was played by a sequence of probability operators on an increasing sequence of Hilbert spaces. In the present article we give a simplified version in which we dispense with the CSGP and immediately begin with a quantum sequential grown process (QSGP). This approach appears to be cleaner and more direct.

We begin by constructing a causet growth process (CGP) for the collection of causal sets (causets) $\pscript$
\cite{rs00,sor03,sur11}. We then define the space of paths $\Omega$ and the space $\Omega _n$ of $n$-paths of length $n$ in $\pscript$. Letting $\ascript$ be the $\sigma$-algebra of subsets of $\Omega$ generated by the cylinder sets
$\cscript (\Omega )$, we obtain the measurable space $(\Omega ,\ascript )$. For a set $A\in\ascript$, its $n$-th approximation
$A^n\subseteq\Omega _n$ is the set of $n$-paths that can be extended to a path in $\Omega$. Forming the Hilbert spaces $H_n=L_2(\Omega _n)$, a sequence of positive operators $\rho _n$ on $H_n$ satisfying a normalization and consistency condition is called a quantum sequential growth process (QSGP) \cite{gud13}. The probability operators $\rho _n$ are employed to define natural decoherence functionals $D_n(A,B)$ and quantum measures \cite{sor94} $\mu _n(A)=D_n(A,A)$ on $\Omega _n$. A set $A\in\ascript$ is called \textit{suitable} if $\lim\mu _n(A^n)$ exists and in this case we define $\mu (A)$ to be this limit. Denoting the collection of suitable sets by $\sscript (\Omega )$ we have that
$\cscript (\Omega )\subseteq\sscript (\Omega )\subseteq\ascript$ and, in general, the inclusions are proper. In a certain sense, $\mu$ becomes a quantum measure on $\sscript (\Omega )$ so $\paren{\Omega ,\sscript (\Omega ),\mu}$ becomes a quantum measure space \cite{sor94}. Continuing our general formalism, we present a discrete analogue to Einstein's field equations. This is accomplished by defining curvature operators $\rscript _{\omega ,\omega '}$ on $L_2(\pscript )\otimes L_2(\pscript )$ for every $\omega ,\omega '\in\Omega$.

We next present a method for constructing a QSGP using an amplitude process (AP). An AP is essentially a Markov chain with complex-valued transition amplitudes. We show that an AP simplifies our previous formalism. We then consider a specific AP that employs a discrete analogue of a quantum action. We use this AP to illustrate some of our previous theory. Finally, we consider the special case in which the QSGP is classical. In this case the decoherence matrices $D_n(\omega ,\omega ')$ become diagonal and the quantum measure $\mu _n$ becomes ordinary measures. We end by studying the possibility that the curvature operator vanishes in this case.

\section{The Causet Growth Process} 
In this article we call a finite partially ordered set a \textit{causet}. Two isomorphic causets are considered to be identical. Let
$\pscript _n$ be the collection of al causets of cardinality $n$, $n=1,2,\ldots$, and let $\pscript =\cup\pscript _n$ be the collection of all causets. If $a,b$ are elements of a causet $x$, we interpret the order $a<b$ as meaning that $b$ is in the causal future of $a$ and $a$ is in the causal past of $b$. An element $a\in x$, for $x\in\pscript$ is \textit{maximal} if there is no $b\in x$ with $a<b$. If $x\in\pscript _n$, $y\in\pscript _{n+1}$, then $x$ \textit{produces} $y$ if $y$ is obtained from $x$ by adjoining a single maximal element $a$ to $x$. In this case we write $y=x\uparrow a$ and use the notation $x\to y$. If $x\to y$, we also say that $x$ is a \textit{producer} of $y$ and $y$ is an \textit{offspring} of $x$. Of course, $x$ may produce many offspring and $y$ may be the offspring of many producers. Moreover, $x$ may produce $y$ in various isomorphic ways \cite{rs00,sor03,vr06}. We denote by $m(x\to y)$ the number of isomorphic ways that $x$ produces $y$ and call $m(x\to y)$ the \textit{multiplicity} of  $x\to y$.

If $a,b\in x$ with $x\in\pscript$, we say that $a$ and $b$ are \textit{comparable} if $a\le b$ or $b\le a$. A \textit{chain} in $x$ is a set of mutually comparable elements of $x$ and an \textit{antichain} is a set of mutually incomparable elements of $x$. By convention, the empty set in both a chain and an antichain and clearly singleton sets also have this property. A chain is
\textit{maximal} if it is not a proper subset of a larger chain. The following result was proved in \cite{gud13}. We denote the cardinality of a finite set $A$ by $\ab{A}$. 

\begin{thm}       
\label{thm21}
The number of offspring $r$ of a causet $x$, including multiplicity, is the number of distinct antichains in $x$. We have that
$\ab{x}+1\le r\le 2^{\ab{x}}$ with both bounds achieved.
\end{thm}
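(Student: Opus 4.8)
The plan is to count the productions of $x$ by identifying each one with a piece of purely order-theoretic data attached to $x$. Fix a representative partial order on the ground set of $x$. A production $x\to y$ is completely determined by the choice of which elements of $x$ are to lie below the newly adjoined maximal element $a$; call this subset $P\subseteq x$ the precursor set of the production. If $b\in P$ and $c<b$ in $x$, then transitivity forces $c<a$, so $c\in P$; hence $P$ must be a down-set (order ideal) of $x$. Conversely, every down-set $P$ yields a legitimate causet $x\uparrow a$ --- take the partial order that extends that of $x$ by setting $b<a$ for precisely the $b\in P$, with $a$ maximal --- and hence a production $x\to x\uparrow a$. Since two productions are regarded as distinct whenever their precursor sets differ (this is exactly the content of ``including multiplicity'': isomorphic offspring coming from different precursor sets are counted separately and are grouped into $m(x\to y)$), we conclude that $r$ is the number of down-sets of $x$.

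I would then invoke the standard bijection between down-sets and antichains of a finite poset. Map a down-set $D$ to its set $\max(D)$ of maximal elements, and map an antichain $A$ to $\downarrow\!A=\brac{c\in x : c\le a \text{ for some } a\in A}$; both maps send $\emptyset$ to $\emptyset$, $\max(D)$ is always an antichain, and $\downarrow\!A$ is always a down-set. Finiteness gives $\downarrow\!(\max D)=D$ --- every element of $D$ lies below some element of $D$ that is maximal in $D$ --- and the antichain condition gives $\max(\downarrow\!A)=A$, so the two maps are mutually inverse. Therefore $r$ equals the number of distinct antichains in $x$.

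The bounds are now quick. Every antichain is a subset of $x$, so $r\le 2^{\ab{x}}$, with equality precisely when every subset of $x$ is an antichain, i.e.\ when $x$ carries the trivial order; the $\ab{x}$-element antichain thus attains $r=2^{\ab{x}}$. On the other side, the empty set and the $\ab{x}$ singletons are $\ab{x}+1$ pairwise distinct antichains, so $r\ge\ab{x}+1$, with equality precisely when any two elements of $x$ are comparable, i.e.\ when $x$ is a chain; the $\ab{x}$-element chain thus attains $r=\ab{x}+1$. Both bounds are therefore achieved.

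The one point that needs genuine care is the bookkeeping in the first step: one must confirm that ``number of offspring including multiplicity'' really means ``number of precursor sets'' and is not silently collapsed by automorphisms of $x$ --- for instance, for the two-element antichain the two singleton precursor sets yield isomorphic offspring yet must still be counted as two, which is what makes $r=2^2=4$ there. Once that is pinned down, the remainder is the routine verification of the down-set/antichain correspondence and of the two extremal configurations.
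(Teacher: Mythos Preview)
Your argument is correct: the identification of productions with down-sets via the precursor set, the standard down-set/antichain bijection, and the extremal analysis for the bounds are all sound, and your caveat about what ``including multiplicity'' must mean is exactly the right thing to flag. Note, however, that the paper does not actually supply its own proof of this theorem---it simply cites \cite{gud13}---so there is no in-paper argument to compare against; what you have written is in fact the standard proof and is almost certainly what appears in the cited reference.
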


The transitive closure of $\,\rightarrow\,$ makes $\pscript$ into a poset itself and we call $(\pscript,\rightarrow )$ the causet growth process (CGP) \cite{hen09,rs00,sor03,sur11}. Figure~1 illustrates the first three steps of the CGP. The 2 in Figure~1 denotes the fact that $m(x_2\to x_6)=2$. It follows from Theorem~\ref{thm21} that the number of offspring for $x_4$, $x_5$, $x_6$, $x_7$ and $x_8$ are $4,5,6,5,8$, respectively. In this case, Theorem~\ref{thm21} tells us that $4\le r\le 8$.

\setlength{\unitlength}{8pt}
\begin{picture}(0,20)
\put(24,8){\circle{7}}  
\put(24,8){\circle*{.35}}
\put(19.5,7){\makebox{$x_1$}}
\put(21.5,8){\vector(-1,1){4}} 
\put(26.5,8){\vector(1,1){4}} 
\put(16,14){\circle{7}} 
\put(14,15.5){\vector(-1,1){5.5}} 
\put(18,15,5){\vector(1,1){5}} 
\put(16,15){\circle*{.35}}
\put(16,13){\circle*{.35}}
\put(16,13.2){\line(0,1){1.5}}
\put(11.75,13.75){\makebox{$x_2$}}
\put(16,16.5){\vector(0,1){4}} 
\put(32,14){\circle{7}} 
\put(30,15.5){\vector(-1,1){5}} 
\put(34,15.5){\vector(1,1){5}} 
\put(27,17){\makebox{$\scriptstyle 2$}}
\put(31,14){\circle*{.35}}
\put(33,14){\circle*{.35}}
\put(35,13.5){\makebox{$x_3$}}
\put(32,16.5){\vector(0,1){4}} 
\put(7,23){\circle{7}} 
\put(7,21.5){\circle*{.35}}
\put(7,23){\circle*{.35}}
\put(7,24.5){\circle*{.35}}
\put(7,21.5){\line(0,1){3}}
\put(2.75,23){\makebox{$x_4$}}
\put(16,23){\circle{7}} 
\put(16,22){\circle*{.35}}
\put(15.25,23.5){\circle*{.35}}
\put(16.75,23.5){\circle*{.35}}
\put(16,22){\line(1,2){.8}}
\put(16,22){\line(-1,2){.8}}
\put(12,23){\makebox{$x_5$}}
\put(24,23){\circle{7}} 
\put(23,22){\circle*{.35}}
\put(24.5,22){\circle*{.35}}
\put(23,23.5){\circle*{.35}}
\put(23,22){\line(0,1){1.5}}
\put(20,23){\makebox{$x_6$}}
\put(32,23){\circle{7}} 
\put(31,22){\circle*{.35}}
\put(33,22){\circle*{.35}}
\put(32,23.75){\circle*{.35}}
\put(31,22){\line(1,2){.8}}
\put(33,22){\line(-1,2){.8}}
\put(28,23){\makebox{$x_7$}}
\put(40,23){\circle{7}} 
\put(38.5,23){\circle*{.35}}
\put(40,23){\circle*{.35}}
\put(41.5,23){\circle*{.35}}
\put(43,23){\makebox{$x_8$}}
\centerline{\textbf{Figure 1}}
\end{picture}
\bigskip

Theorem~\ref{thm21} has a kind of dual theorem that we now discuss. Two maximal chains $C_1,C_2$ in a causet $x$ are \textit{equivalent} if, when the maximal elements $a_1$ and $a_2$ of $C_1$, $C_2$, respectively, are deleted, then the resulting causets $C_1\smallsetminus\brac{a_1}$, $C_2\smallsetminus\brac{a_2}$ are isomorphic. For example, in Figure~1, the maximal chains in $x_4,x_5,x_7,x_8$ are mutually equivalent while $x_6$ has two inequivalent maximal chains. Notice that $x_4,x_5,x_7,x_8$ each have one producer, while $x_6$ has two producers. This observation motivates
Theorem~\ref{thm23}. But first we prove a lemma.

\begin{lem}       
\label{lem22}
The definition of equivalent maximal chains gives an equivalence relation.
\end{lem}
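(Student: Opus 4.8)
The plan is to verify the three defining properties of an equivalence relation --- reflexivity, symmetry, and transitivity --- by reducing each to the corresponding property of the relation ``is isomorphic to'' on finite posets. First I would record a preliminary observation: since a maximal chain $C$ in a causet $x$ is totally ordered and finite, it possesses a unique largest element $a$, so the causet $C\smallsetminus\brac{a}$ appearing in the definition is unambiguously determined by $C$. Moreover, deleting the top element of a chain again yields a (possibly empty) chain, hence a finite poset, so it is meaningful to ask whether two such deletions are isomorphic.

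With that in hand, reflexivity is immediate: for a maximal chain $C$ with top element $a$, the identity map is an order isomorphism of $C\smallsetminus\brac{a}$ with itself, so $C$ is equivalent to $C$. For symmetry, suppose $C_1$ is equivalent to $C_2$, witnessed by an order isomorphism $\varphi\colon C_1\smallsetminus\brac{a_1}\to C_2\smallsetminus\brac{a_2}$ (where $a_i$ is the maximal element of $C_i$); then $\varphi^{-1}$ is an order isomorphism in the reverse direction, so $C_2$ is equivalent to $C_1$. For transitivity, if $\varphi\colon C_1\smallsetminus\brac{a_1}\to C_2\smallsetminus\brac{a_2}$ and $\psi\colon C_2\smallsetminus\brac{a_2}\to C_3\smallsetminus\brac{a_3}$ are order isomorphisms, then $\psi\circ\varphi$ is an order isomorphism $C_1\smallsetminus\brac{a_1}\to C_3\smallsetminus\brac{a_3}$, so $C_1$ is equivalent to $C_3$.

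Since there is genuinely no hard step here, the only point that merits care is the preliminary one: checking that ``the maximal element of a maximal chain'' is well defined and that the object obtained after its deletion is still the kind of structure for which isomorphism is defined. Everything else reduces to the standard facts that identity maps are isomorphisms, that inverses of isomorphisms are isomorphisms, and that the composite of two isomorphisms is an isomorphism.
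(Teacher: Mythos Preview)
Your proposal is correct and follows essentially the same approach as the paper's proof: both verify reflexivity, symmetry, and transitivity by reducing to the fact that ``is isomorphic to'' is an equivalence relation on finite posets. Your version is slightly more explicit (naming the isomorphisms and noting that the top element of a finite chain is well defined), but the underlying argument is identical.
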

\begin{proof}
If $C_1,C_2$ are equivalent maximal chains in a causet $x$, we write $C_1\sim C_2$. It is clear that $C_1\sim C_1$ and $C_1\sim C_2$ implies $C_2\sim C_1$. To prove transitivity, suppose that $C_1\sim C_2$ and $C_2\sim C_3$. Let $a_1,a_2,a_3$ be the maximal elements of $C_1,C_2,C_3$, respectively. Then $C_1\smallsetminus\brac{a_1}$ and $C_2\smallsetminus\brac{a_2}$ are isomorphic and so are $C_2\smallsetminus\brac{a_2}$ and $C_3\smallsetminus\brac{a_3}$. Since being isomorphic is an equivalence relation, we conclude that $C_1\smallsetminus\brac{a_1}$ and $C_3\smallsetminus\brac{a_3}$ are isomorphic. Hence, $C_1\sim C_3$.
\end{proof}

\begin{thm}       
\label{thm23}
The number of producers of a causet $y$ is the number of inequivalent maximal chains in $y$.
\end{thm}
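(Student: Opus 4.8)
The plan is to set up an explicit bijection between the isomorphism classes of producers of $y$ and the equivalence classes of maximal chains of $y$, the bridge being, in both directions, the operation of deleting a maximal element. First I would unwind the definition of $\,\to\,$: to say that $x$ produces $y$ means precisely that $y$ arises from $x$ by adjoining one new maximal element, and deleting that element from $y$ returns a copy of $x$; conversely, for every maximal element $a$ of $y$ the induced subposet $y\smallsetminus\brac{a}$ produces $y$ (re-adjoin $a$ with the relations it had in $y$, and note it is still maximal). Hence the producers of $y$ are exactly the causets $y\smallsetminus\brac{a}$ as $a$ ranges over the maximal elements of $y$, and the number of producers of $y$ is the number of isomorphism types occurring among these causets. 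Unlike in Theorem~\ref{thm21} there are no multiplicities to track, since producers are counted up to isomorphism.

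Next I would bring maximal chains into the picture. Since $y$ is finite, every maximal chain $C$ of $y$ has a largest element $a_C$, and $a_C$ must be a maximal element of $y$, for otherwise $C$ could be enlarged. The map $C\mapsto a_C$ is onto the maximal elements of $y$: given a maximal $a$, start from $\brac{a}$ and keep adjoining a comparable smaller element until none remains, producing a maximal chain with top $a$. Define $\Phi(C)$ to be the isomorphism class of the producer $y\smallsetminus\brac{a_C}$. By the previous paragraph $\Phi$ maps the maximal chains of $y$ onto the set of producers of $y$, and, by the definition of equivalent maximal chains (reading the deletion of the top elements as taking place in the ambient causet $y$), $\Phi(C_1)=\Phi(C_2)$ if and only if $y\smallsetminus\brac{a_{C_1}}\cong y\smallsetminus\brac{a_{C_2}}$, i.e.\ if and only if $C_1\sim C_2$. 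Thus $\Phi$ induces a bijection from the $\sim$-classes of maximal chains of $y$ onto the producers of $y$; as Lemma~\ref{lem22} shows $\sim$ to be an equivalence relation, this is the asserted equality of cardinalities.

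I do not anticipate a real obstacle here: the producer/deletion dictionary is immediate, the existence of $a_C$ and the surjectivity of $C\mapsto a_C$ use only finiteness of $y$, and the description of the fibers of $\Phi$ is nothing more than the definition of $\sim$. The one place that deserves care is precisely that last step --- explicitly matching "$C_1\sim C_2$" with "the causets $y\smallsetminus\brac{a_{C_1}}$ and $y\smallsetminus\brac{a_{C_2}}$ are isomorphic" --- so I would spell that identification out before drawing the conclusion; the rest is routine bookkeeping.
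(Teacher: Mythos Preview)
Your approach is essentially the same as the paper's: both set up the correspondence ``producer $\leftrightarrow$ maximal element $\leftrightarrow$ top of a maximal chain'' and then identify the equivalence $\sim$ with isomorphism of the deleted causets $y\smallsetminus\brac{a}$. The paper's proof is terser but follows the same line; your explicit map $\Phi$ and the surjectivity of $C\mapsto a_C$ just spell out steps the paper leaves implicit.

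One point worth keeping: your parenthetical about reading the deletion as taking place in the ambient causet $y$ (rather than literally in the chain $C_i$) is exactly the right clarification. The paper's stated definition writes ``$C_1\smallsetminus\brac{a_1}$'' and ``$C_2\smallsetminus\brac{a_2}$'', which taken literally would make $\sim$ merely ``same length of maximal chain'' and the theorem would fail (e.g.\ for $y$ the disjoint union of a $\Lambda$ and a $2$-chain). The paper's own proof, however, uses $y\smallsetminus\brac{a}$ and $y\smallsetminus\brac{b}$, so your reading matches the intended one. Do spell this out as you planned.
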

\begin{proof}
If $x$ produces $y$, then $y=x\uparrow a$ and $a$ is the maximal element of at least one maximal chain in $y$. Conversely, if $C$ is a maximal chain in $y$ with maximal element $a$, then $y\smallsetminus\brac{a}$ produces $y$ with
$y=\paren{y\smallsetminus\brac{a}}\uparrow a$. If $C_1$ is another maximal chain that is inequivalent to $C$ and $b$ is the maximal element  of $C_1$, then $y\smallsetminus\brac{a}$ and $y\smallsetminus\brac{b}$ are nonisomorphic producers of $y$. Hence, the correspondence between producers of $y$ and inequivalent maximal chains in $y$ is a bijection.
\end{proof}

\begin{exam}{1}
The following causet has two inequivalent maximal chains.
\end{exam}

\begin{center}
\includegraphics[scale=1.25]{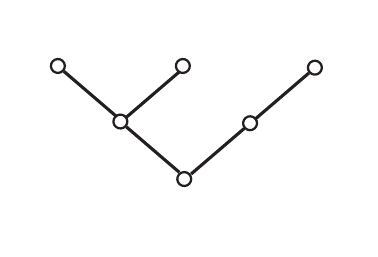}
\end{center}

\noindent and the two producers
\begin{center}
\includegraphics[scale=1.25]{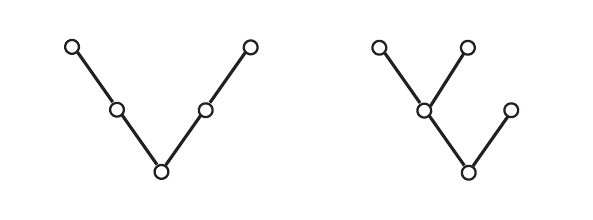}
\end{center}

Notice that equivalent maximal chains may appear to be quite different. In particular, if two (or more) maximal chains have the same maximal element , they are equivalent.

\begin{exam}{2}
The three maximal chains in the following causet are equivalent, so this causet has just one producer.
\end{exam}

\begin{center}
\includegraphics[scale=1.25]{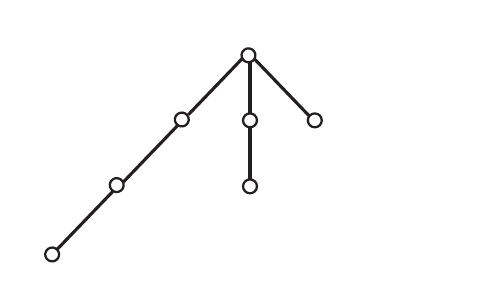}
\end{center}

A \textit{path} in $\pscript$ is a sequence (string) $\omega _1\omega _2\ldots$, where $\omega _i\in\pscript _i$ and
$\omega _i\to\omega _{i+1}$, $i=1,2,\ldots\,$. An $n$-\textit{path} in $\pscript$ is a finite string
$\omega _1\omega _2\ldots\omega _n$, where again $\omega _i\in\pscript _i$ and $\omega _i\to\omega _{i+1}$.
We denote the set of paths by $\Omega$ and the set of $n$-paths by $\Omega _n$. If
$\omega =\omega _1\omega _2\ldots\omega _n\in\Omega _n$ we define $(\omega\rightarrow )\subseteq\Omega _{n+1}$ by
\begin{equation*}
(\omega\rightarrow )=\brac{\omega _1\omega _2\ldots\omega _n\omega _{n+1}\colon\omega _n\to\omega _{n+1}}
\end{equation*}
Thus, $(\omega\rightarrow )$ is the set of one-step continuations of $\omega$. If $A\in\Omega _n$ we define
$(A\rightarrow )\subseteq\Omega _{n+1}$ by
\begin{equation*}
(A\rightarrow )=\bigcup\brac{(\omega\rightarrow )\colon\omega\in A}
\end{equation*}

The set of all paths beginning with $\omega\in\Omega _n$ is called an \textit{elementary cylinder set} and is denoted by
$\rmcyl (\omega )$. If $A\in\Omega _n$, then the \textit{cylinder set} $\rmcyl (A)$ is defined by
\begin{equation*}
\rmcyl (A)=\bigcup\brac{\rmcyl (\omega )\colon\omega\in A}
\end{equation*}
Using the notation
\begin{align*}
\cscript (\Omega _n)&=\brac{\rmcyl (A)\colon A\subseteq\Omega _n}\\
\intertext{we see that}
\cscript (\Omega _1)&\subseteq\cscript (\Omega _2)\subseteq\ldots
\end{align*}
is an increasing sequence of subalgebras of the \textit{cylinder algebra} $\cscript (\Omega )=\cup\,\cscript (\Omega _n)$. Letting $\ascript$ be the $\sigma$-algebra generated by $\cscript (\Omega )$, we have that $(\Omega ,\ascript )$ is a measurable space. For $A\subseteq\Omega$ we define the sets $A^n\subseteq\Omega _n$ by
\begin{equation*}
A^n=\brac{\omega _1\omega _2\ldots\omega _n\colon\omega _1\omega _2\ldots\omega _n\omega _{n+1}\ldots\in A}
\end{equation*}
That is, $A^n$ is the set of $n$-paths that can be continued to a path in $A$. We think of $A^n$ as the $n$-step approximation to $A$. We have that
\begin{equation*}
\rmcyl (A^1)\supseteq\rmcyl (A^2)\supseteq\rmcyl (A^3)\supseteq\ldots\supseteq A
\end{equation*}
so that $A\subseteq\cap\,\rmcyl (A^n)$ but $A\ne\cap\,\rmcyl (A^n)$ in general even if $A\in\ascript$.

\section{Quantum Sequential Growth Processes} 
Let $H_n=L_2(\Omega _n)$ be the $n$-\textit{path Hilbert space} $\complex ^{\Omega _n}$ with the usual inner product
\begin{equation*}
\elbows{f,g}=\sum\brac{\overline{f(\omega )}g(\omega )\colon\omega\in\Omega _n}
\end{equation*}
For $A\subseteq\Omega _n$, the characteristic function $\chi _A\in H_n$ has norm $\|\chi _A\|=\sqrt{\ab{A}\,}$. In particular, $1_n=\chi _{\Omega _n}$ satisfies $\|1_n\|=\sqrt{\ab{\Omega _n}\,}$. A positive operator $\rho$ on $H_n$ that satisfies
$\elbows{\rho 1_n,1_n}=1$ is called a \textit{probability operator}. Corresponding to a probability operator $\rho$ we define the \textit{decoherence functional} $D_\rho\colon 2^{\Omega _n}\times 2^{\Omega _n}\to\complex$ by
\begin{equation*}
D_\rho (A,B)=\elbows{\rho\chi _B,\chi _A}
\end{equation*}
Notice that $D_\rho$ has the usual properties of a decoherence functional. That is, $D_\rho (\Omega _n,\Omega _n)=1$,
$D_\rho (A,B)=\overline{D_\rho (B,A)}$, $A\mapsto D_\rho (A,B)$ is a complex measure on $2^{\Omega _n}$ for every
$B\subseteq\Omega _n$ and if $A_i\subseteq\Omega _n$, $i=1,2,\ldots ,r$, then the $r\times r$ matrix with components
$D_\rho (A_i,A_j)$ is positive semidefinite \cite{hen09,sor94,sor03,sur11}.

We interpret $D_\rho (A,B)$ as a measure of the interference between the events $A,B$ when the system is described by
$\rho$. We also define the $q$-\textit{measure} $\mu _\rho\colon 2^{\Omega _n}\to\real ^+$ by $\mu _\rho (A)=D_\rho (A,A)$ and interpret $\mu _\rho (A)$ as the quantum propensity of the event $A\subseteq\Omega _n$. In general, $\mu _\rho$ is not additive on $2^{\Omega _n}$ so $\mu _\rho$ is not a measure. However, $\mu _\rho$ is \textit{grade}-2 \textit{additive}
\cite{sor94,sor03} in the sense that if $A,B,C\in 2^{\Omega _n}$ are mutually disjoint, then
\begin{equation*}
\mu _\rho (A\cup B\cup C)=\mu _\rho (A\cup B)+\mu _\rho (A\cup C)+\mu _\rho (B\cup C)
-\mu _\rho (A)-\mu _\rho (B)-\mu _\rho (C)
\end{equation*}
Let $\rho _n$ be a probability operator on $H_n$, $n=1,2,\ldots\,$. We say that the sequence $\brac{\rho _n}$ is
\textit{consistent} \cite{gud13} if
\begin{equation*}
D_{\rho _{n+1}}(A\rightarrow, B\rightarrow )=D_{\rho _n}(A,B)
\end{equation*}
for every $A,B\subseteq\Omega _n$. We call a consistent sequence $\brac{\rho _n}$ a
\textit{quantum sequential growth process} (QSGP). Let $\brac{\rho _n}$ be a QSGP and denote the corresponding
$q$-measures by $\mu _n$. A set $A\in\ascript$ is called \textit{suitable} if $\lim\mu _n(A^n)$ exists (and is finite) in which case we define $\mu (A)=\lim\mu _n(A^n)$. We denote the collection of suitable sets by $\sscript (\Omega )$. Of course,
$\emptyset ,\Omega\in\sscript (\Omega )$ with $\mu (\emptyset )=0$, $\mu (\Omega )=1$. If $A\in\cscript (\Omega )$, then
$A=\rmcyl (B)$ where $B\subseteq\Omega _n$ for some $n\in\positive$. Since $A^n=B$, $A^{n+1}=B\rightarrow$,
$A^{n+2}=(B\rightarrow )\rightarrow ,\ldots$, it follows from consistency that $\lim\mu _n(A^n)=\mu _n(B)$. Hence,
$A\in\sscript (\Omega )$ and $\mu (A)=\mu _n(B)$. It follows that
$\cscript (\Omega )\subseteq\sscript (\Omega )\subseteq\ascript$ and it can be shown that the inclusions are proper, in general. In a certain sense, $\mu$ is a quantum measure on $\sscript (\Omega )$ that extends the $q$-measures $\mu _n$.

There are physically relevant sets in $\ascript$ that are not in $\cscript (\Omega )$. In this case it is important to know whether such a set $A$ is in $\sscript (\Omega )$ and to find $\mu (A)$. For example, if $\omega\in\Omega$ then
\begin{equation*}
\brac{\omega}=\bigcap _{n=1}^\infty\brac{\omega}^n\in\ascript
\end{equation*}
but $\brac{\omega}\notin\cscript (\Omega )$. It is of interest whether $\brac{\omega}\in\sscript (\Omega)$ and if so, to 
find $\mu\paren{\brac{\omega}}$. As another example, the complement $\brac{\omega}'\notin\cscript (\Omega )$. Even if
$\brac{\omega}\in\sscript (\Omega )$, since $\mu _n(A')\ne 1-\mu _n(A)$ for $A\subseteq\Omega _n$, it does not immediately follow that $\brac{\omega}'\in\sscript (\Omega )$. For this reason we would have to treat $\brac{\omega}'$ as a separate case.

\section{Discrete Einstein Equations} 
Let $\brac{\rho _n}$ be a QSGP with corresponding decoherence matrices
\begin{equation*}
D_n(\omega ,\omega ')=D_n\paren{\brac{\omega},\brac{\omega '}}
\end{equation*}
$\omega ,\omega '\in\Omega _n$. If $\omega =\omega _1\omega _2\ldots\omega _n\in\Omega _n$ and $\omega _i=x$ for some $i$, then $\omega$ \textit{contains} $x$. Notice that $\omega$ contains $x$ if and only if $\omega _{\ab{x}}=x$. For
$x,y\in\pscript$ with $\ab{x},\ab{y}\le n$ we define
\begin{equation*}
D(x,y)=\sum\brac{D_n(\omega ,\omega ')\colon\omega\text{ contains } x,\omega '\text{ contains }y}
\end{equation*}
If $A_x\subseteq\Omega _n$ is the set $A_x=\brac{\omega\in\Omega _n\colon\omega _{\ab{x}}=x}$ and similarly for
$A_y\subseteq\Omega _n$, then $D(x,y)=D_n(A_x,A_y)$. Due to the consistency of $\brac{\rho _n}$, $D(x,y)$ is independent of $n$ if $\ab{x},\ab{y}\le n$. Also, $D(x,y)$ for $\ab{x},\ab{y}\le n$, are the components of a positive semidefinite matrix. We think of
$\pscript$ as a discrete analogue of a differentiable manifold and $D(x,y)$ as a discrete analogue of a metric tensor.

Let $K=L_2(\pscript )$ be the Hilbert space of square summable complex-valued functions on $\pscript$ with the standard inner product
\begin{equation*}
\elbows{f,g}=\sum _{x\in\pscript}\overline{f(x)}g(x)
\end{equation*}
Let $L=K\otimes K$ which we can identify with the space of square summable complex-valued functions on
$\pscript\times\pscript$. For $\omega ,\omega '\in\Omega$ we define the \textit{covariant bidifference operator}
$\nabla _{\omega ,\omega '}$ on $L$ \cite{gud122} by
\begin{equation*}
\nabla _{\omega ,\omega '}f(x,y)
=\sqbrac{D(\omega _{\ab{x}-1},\omega '_{\ab{y}-1})f(x,y)-D(x,y)f(\omega _{\ab{x}-1},\omega '_{\ab{y}-1})}
  \delta _{x,\omega _{\ab{x}}}\delta _{y,\omega '_{\ab{y}}}
\end{equation*}
In general, $\nabla _{\omega ,\omega '}$ may be unbounded but it is densely defined. The covariant designation stems from the fact that $\nabla _{\omega ,\omega '}D(x,y)=0$ for every $x,y\in\pscript$, $\omega ,\omega '\in\Omega$.

In analogy to the curvature tensor on a manifold, we define the \textit{discrete curvature operator} $\rscript _{\omega ,\omega '}$ on $L$ by
\begin{equation*}
\rscript _{\omega ,\omega '}=\nabla _{\omega ,\omega '}-\nabla _{\omega ',\omega}
\end{equation*}
We also define the \textit{discrete metric operator} $\dscript _{\omega ,\omega '}$ on $L$ by
\begin{align*}
\dscript&_{\omega ,\omega '}f(x,y)\\
&=D(x,y)
\left[f(\omega '_{\ab{x}-1},\omega _{\ab{y}-1})\delta _{x,\omega '_{\ab{x}}}\delta _{y,\omega _{\ab{y}}}\right.
  \left.-f(\omega _{\ab{x}-1},\omega '_{\ab{y}-1})\delta _{x,\omega _{\ab{x}}}\delta _{y,\omega '_{\ab{y}}}\right]
\end{align*}
and the \textit{discrete mass-energy} operator $\tscript _{\omega ,\omega '}$ on $L$ by
\begin{align*}
\tscript&_{\omega ,\omega '}f(x,y)\\
&=\left[D(\omega _{\ab{x}-1},\omega '_{\ab{y}-1})\delta _{x,\omega _{\ab{x}}}\delta _{y,\omega '_{\ab{y}}}\right.
  \left.-D(\omega '_{\ab{x}-1},\omega _{\ab{y}-1})\delta _{x,\omega '_{\ab{x}}}\delta _{y,\omega _{\ab{y}}}\right]f(x,y)
\end{align*}
It is not hard to show that
\begin{equation}         
\label{eq41}
\rscript _{\omega ,\omega '}=\dscript _{\omega ,\omega '}+\tscript _{\omega ,\omega '}
\end{equation}
We call $\eqref{eq41}$ the \textit{discrete Einstein equations} \cite{gud122}.

If we can find $D(x,y)$ such that the classical Einstein equations are an approximation to \eqref{eq41} then it would give information about $D(x,y)$. Moreover, an important problem in discrete quantum gravity theory is to test whether general relativity is a close approximation to the theory. Whether Einstein's classical equations are an approximation to \eqref{eq41} would provide such a test.

We can obtain a simplification by considering the \textit{contracted discrete operators}
$\rscripthat _{\omega ,\omega '}$, $\dscripthat _{\omega ,\omega '}$, $\tscripthat _{\omega ,\omega '}$, from their domains in
$L$ into $K$, respectively, given by $\rscripthat _{\omega ,\omega '}f(x)=\rscript _{\omega ,\omega '}f(x,x)$,
$\dscripthat _{\omega ,\omega '}f(x)=\dscript _{\omega ,\omega '}f(x,x)$,
$\tscripthat _{\omega ,\omega '}f(x)=\tscript _{\omega ,\omega '}f(x,x)$.
We then have the \textit{contracted discrete Einstein equations}
\begin{equation*}
\rscripthat _{\omega ,\omega '}=\dscripthat _{\omega ,\omega '}+\tscripthat _{\omega ',\omega}
\end{equation*}
Defining $\mu (x)=\dscript (x,x)$ for every $x\in\pscript$ we have
\begin{align*}
\dscripthat _{\omega ,\omega '}f(x)&=\mu (x)
\sqbrac{f(\omega '_{\ab{x}-1},\omega _{\ab{x}-1})-f(\omega _{\ab{x}-1},\omega '_{\ab{x}-1}}
\delta _{x,\omega '_{\ab{x}}}\delta _{x,\omega _{\ab{x}}}\\
\tscripthat _{\omega ,\omega '}f(x)&=\sqbrac{2i\itim D(\omega _{\ab{x}-1},\omega '_{\ab{x}-1}}
\delta _{x,\omega _{\ab{x}}}\delta _{x,\omega '_{\ab{x}}}f(x,x)
\end{align*}

We can obtain a better understanding of these operators by seeing how they are realized on basis vectors. For $x\in\pscript$ define the unit vector $e_x=\chi _{\brac{x}}$ in $K$. Then $\brac{e_x\colon x\in\pscript}$ forms an orthonormal basis for $K$ and $S=\brac{e_x\otimes e_y\colon x,y\in\pscript}$ forms an orthonormal basis for $L$. The subspace spanned by $S$ is a dense subspace of $L$ on which the operators are defined. Let $\omega ,\omega '\in\Omega$ and $x,y\in\pscript$. We call
$(x,y)$ an $(\omega ,\omega ')$ \textit{pair} if $\omega$ contains $x$ and $\omega '$ contains $y$. If $(x,y)$ is both an
$(\omega ,\omega ')$ pair and an $(\omega ',\omega )$ pair, then $(x,y)$ is an $(\omega ,\omega ')-(\omega ',\omega )$
\textit{pair}. Then all our operators vanish except possibly at points $(x,y)$ that are $(\omega ,\omega ')$ or
$(\omega ',\omega )$ pairs. We conclude that these operators are local in the sense that they vanish except along the two paths $\omega ,\omega '$.

\begin{thm}       
\label{thm41}
{\rm (a)}\enspace If $(x,y)$ is an $(\omega ,\omega ')$ pair but not an $(\omega ',\omega )$ pair, then
\begin{align*}
\dscript _{\omega ,\omega '}e_x\otimes e_y&=-D(\omega _{\ab{x}+1},\omega '_{\ab{y}+1})
  e_{\omega _{\ab{x}+1}}\otimes e_{\omega '_{\ab{y}+1}}\\
\tscript _{\omega ,\omega '}e_x\otimes e_y&=D(\omega _{\ab{x}-1},\omega '_{\ab{y}-1})e_x\otimes e_y
\end{align*}
{\rm (b)}\enspace If $(x,y)$ is an $(\omega ',\omega )$ pair but not an $(\omega ,\omega ')$ pair, then
\begin{align*}
\dscript _{\omega ,\omega '}e_x\otimes e_y&=D(\omega '_{\ab{x}+1},\omega _{\ab{y}+1})
  e_{\omega '_{\ab{x}+1}}\otimes e_{\omega _{\ab{y}+1}}\\
\tscript _{\omega ,\omega '}e_x\otimes e_y&=-D(\omega '_{\ab{x}-1},\omega _{\ab{y}-1})e_x\otimes e_y
\end{align*}
{\rm (c)}\enspace If $(x,y)$ is an $(\omega ,\omega ')-(\omega ',\omega )$ pair, then
\begin{align*}
\dscript _{\omega ,\omega '}e_x\otimes e_y&=D(\omega '_{\ab{x}+1},\omega _{\ab{y}+1})
  e_{\omega '_{\ab{x}+1}}\otimes e_{\omega _{\ab{y}+1}}\\
 &\quad -D(\omega _{\ab{x}+1},\omega '_{\ab{y}+1})
  e_{\omega _{\ab{x}+1}}\otimes e_{\omega '_{\ab{y}+1}}\\
\tscript _{\omega ,\omega '}e_x\otimes e_y
  &=\sqbrac{D(\omega _{\ab{x}-1},\omega '_{\ab{y}-1})-D(\omega '_{\ab{x}-1},\omega _{\ab{y}-1})}e_x\otimes e_y
\end{align*}
\end{thm}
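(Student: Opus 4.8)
The plan is to unwind the definitions of $\dscript_{\omega,\omega'}$ and $\tscript_{\omega,\omega'}$ directly on the basis function $e_x\otimes e_y$, which as an element of $L$ is the function $(u,v)\mapsto\delta_{u,x}\delta_{v,y}$ on $\pscript\times\pscript$, and then to read off the three cases according to which Kronecker deltas survive. Throughout I would use the remark, noted at the start of Section~4, that ``$\omega$ contains $x$'' is equivalent to $\omega_{\ab{x}}=x$, so that ``$(x,y)$ is an $(\omega,\omega')$ pair'' is exactly the event $\delta_{x,\omega_{\ab{x}}}\delta_{y,\omega'_{\ab{y}}}=1$ and ``$(x,y)$ is an $(\omega',\omega)$ pair'' the event $\delta_{x,\omega'_{\ab{x}}}\delta_{y,\omega_{\ab{y}}}=1$.

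I would first dispose of $\tscript_{\omega,\omega'}$, which acts by multiplication. Evaluating $\tscript_{\omega,\omega'}(e_x\otimes e_y)$ at a point $(u,v)$, the factor $(e_x\otimes e_y)(u,v)=\delta_{u,x}\delta_{v,y}$ kills every term except the one at $(u,v)=(x,y)$, so
\begin{equation*}
\tscript_{\omega,\omega'}(e_x\otimes e_y)=\sqbrac{D(\omega_{\ab{x}-1},\omega'_{\ab{y}-1})\delta_{x,\omega_{\ab{x}}}\delta_{y,\omega'_{\ab{y}}}-D(\omega'_{\ab{x}-1},\omega_{\ab{y}-1})\delta_{x,\omega'_{\ab{x}}}\delta_{y,\omega_{\ab{y}}}}e_x\otimes e_y .
\end{equation*}
By the dictionary above, the first product of deltas is $1$ precisely when $(x,y)$ is an $(\omega,\omega')$ pair and the second precisely when it is an $(\omega',\omega)$ pair; imposing the hypotheses of (a), (b), (c) --- one indicator on and the other off, or both on --- yields the three stated $\tscript$ formulas at once.

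Next I would handle $\dscript_{\omega,\omega'}$. Evaluating at $(u,v)$, its two terms carry the factors $(e_x\otimes e_y)(\omega'_{\ab{u}-1},\omega_{\ab{v}-1})=\delta_{\omega'_{\ab{u}-1},x}\delta_{\omega_{\ab{v}-1},y}$ and $(e_x\otimes e_y)(\omega_{\ab{u}-1},\omega'_{\ab{v}-1})=\delta_{\omega_{\ab{u}-1},x}\delta_{\omega'_{\ab{v}-1},y}$. The key observation is the length bookkeeping: since $\omega'_{\ab{u}-1}\in\pscript_{\ab{u}-1}$, the first factor can be nonzero only if $\ab{u}=\ab{x}+1$ and $\ab{v}=\ab{y}+1$, in which case it equals $\delta_{x,\omega'_{\ab{x}}}\delta_{y,\omega_{\ab{y}}}$ --- the indicator that $(x,y)$ is an $(\omega',\omega)$ pair --- and similarly, under $\ab{u}=\ab{x}+1$, $\ab{v}=\ab{y}+1$, the second factor becomes the indicator that $(x,y)$ is an $(\omega,\omega')$ pair. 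With those length constraints in force, the remaining deltas $\delta_{u,\omega'_{\ab{u}}}\delta_{v,\omega_{\ab{v}}}$ and $\delta_{u,\omega_{\ab{u}}}\delta_{v,\omega'_{\ab{v}}}$ pin down the output point, forcing $(u,v)=(\omega'_{\ab{x}+1},\omega_{\ab{y}+1})$ in the first term and $(u,v)=(\omega_{\ab{x}+1},\omega'_{\ab{y}+1})$ in the second. Hence $\dscript_{\omega,\omega'}(e_x\otimes e_y)$ equals $D(\omega'_{\ab{x}+1},\omega_{\ab{y}+1})\,e_{\omega'_{\ab{x}+1}}\otimes e_{\omega_{\ab{y}+1}}$ when $(x,y)$ is an $(\omega',\omega)$ pair, minus $D(\omega_{\ab{x}+1},\omega'_{\ab{y}+1})\,e_{\omega_{\ab{x}+1}}\otimes e_{\omega'_{\ab{y}+1}}$ when $(x,y)$ is an $(\omega,\omega')$ pair; specializing to (a), (b), (c) reproduces the three displayed $\dscript$ formulas.

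The argument is essentially bookkeeping, so I do not expect a genuine obstacle; the one spot demanding care is precisely this length-matching step --- recognizing that substituting $e_x\otimes e_y$ into the shifted argument $(\omega'_{\ab{u}-1},\omega_{\ab{v}-1})$ forces $\ab{u}=\ab{x}+1$, $\ab{v}=\ab{y}+1$ and thereby turns the surviving Kronecker deltas into the ``pair'' conditions of the hypotheses, together with keeping the two terms and their signs from being interchanged. I would also remark that summing the $\dscript$ and $\tscript$ formulas gives the action of $\rscript_{\omega,\omega'}=\dscript_{\omega,\omega'}+\tscript_{\omega,\omega'}$ on $S$, and that all three operators annihilate $e_x\otimes e_y$ whenever $(x,y)$ is neither kind of pair, which is the locality statement preceding the theorem.
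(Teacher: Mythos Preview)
Your proposal is correct and follows essentially the same approach as the paper's proof: both evaluate the defining expressions for $\dscript_{\omega,\omega'}$ and $\tscript_{\omega,\omega'}$ at a generic point $(u,v)$, track which Kronecker deltas survive, and read off the result. The paper carries this out only for part~(a) and declares the other parts similar, whereas you derive a single master formula and specialize to all three cases; your explicit emphasis on the length-matching step $\ab{u}=\ab{x}+1$, $\ab{v}=\ab{y}+1$ is exactly the mechanism the paper uses implicitly when passing from $\omega_{\ab{u}-1}=x$ to $u=\omega_{\ab{x}+1}$.
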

\begin{proof}
We shall prove Part~(a) and the other parts are similar. If $(x,y)$ is an $(\omega ,\omega ')$ pair but not an
$(\omega ',\omega )$ pair, then
\begin{align}         
\label{eq42}
(\dscript _{\omega ,\omega '}&e_x\otimes e_y)(u,v)\notag\\
  &=D(u,v)\left[e_x\otimes e_y(\omega '_{\ab{u}-1},\omega _{\ab{v}-1})
  \delta _{u,\omega '_{\ab{u}}}\delta _{v,\omega _{\ab{v}}}\right.\notag\\
 &\quad\left.-e_x\otimes e_y(\omega _{\ab{u}-1},\omega '_{\ab{v}-1})
 \delta _{u,\omega _{\ab{u}}}\delta _{v,\omega '_{\ab{v}}}\right]
\end{align}
The right side of \eqref{eq42} vanishes unless $\omega '_{\ab{u}-1}=x$ and $\omega _{\ab{v}-1}=y$ or
$\omega _{\ab{u}-1}=x$ and $\omega '_{\ab{v}-1}=y$. Since $(x,y)$ is not an $(\omega ',\omega )$ pair, the second alternative applies. This term does not vanish only if
$\omega _{\ab{u}-1}=\omega _{\ab{x}}$ and $\omega '_{\ab{v}-1}=\omega _{\ab{y}}$ so we have
\begin{align*}
u&=\omega _{\ab{u}}=\omega _{\ab{x}+1}\\
\intertext{and}
v&=\omega '_{\ab{v}}=\omega '_{\ab{y}+1}
\end{align*}
In this case we have
\begin{equation*}
\dscript _{\omega ,\omega '}e_x\otimes e_y
  =-D(\omega _{\ab{x}+1},\omega '_{\ab{y}+1})e_{\omega _{\ab{x}+1}}\otimes e_{\omega '_{\ab{y}+1}}
\end{equation*}
In a similar way, when $(x,y)$ is an $(\omega ,\omega ')$ pair but not an $(\omega ',\omega )$ pair we have
\begin{equation}         
\label{eq43}
(\tscript _{\omega ,\omega '}e_x\otimes e_y)(u,v)
  =\sqbrac{D(\omega _{\ab{u}-1},\omega '_{\ab{v}-1})\delta _{u,\omega _{\ab{u}}}\delta _{v,\omega '_{\ab{v}}}}
  e_x\otimes e_y(u,v)
\end{equation}
The right side of \eqref{eq43} vanishes unless $u=x$ and $v=y$. In this case we have
\begin{equation*}
\tscript _{\omega ,\omega '}e_x\otimes e_y
  =-D(\omega _{\ab{x}-1},\omega '_{\ab{y}-1})e_x\otimes e_y\qedhere
\end{equation*}
\end{proof}

An interesting special case is when $(x,x)$ is an $(\omega ,\omega ')$ pair. We then have
\begin{align*}
\dscript _{\omega ,\omega '}e_x\otimes e_x&=D(\omega '_{\ab{x}+1},\omega _{\ab{x}+1})
  e_{\omega '_{\ab{x}+1}}\otimes e_{\omega _{\ab{x}+1}}\\
  &\quad -D(\omega _{\ab{x}+1},\omega '_{\ab{x}+1})e_{\omega _{\ab{x}+1}}\otimes e_{\omega '_{\ab{x}+1}}
  \end{align*}
Notice that this is an entanglement of $e_{\omega _{\ab{x}+1}}$ and $e_{\omega '_{\ab{x}+1}}$. We also have
\begin{equation*}
\tscript _{\omega ,\omega '}e_x\otimes e_y
  =2i\itim D(\omega _{\ab{x}-1},\omega '_{\ab{x}-1})e_x\otimes e_x
\end{equation*}

For the contracted operators we have that $\dscripthat _{\omega ,\omega '}e_x\otimes e_y=0$ except for the cases
\begin{align*}
\dscripthat _{\omega ,\omega '}e_{\omega '_{\ab{x}-1}}\otimes e_{\omega _{\ab{x}-1}}&=\mu (x)e_x\\
\dscript _{\omega ,\omega '}e_{\omega _{\ab{x}-1}}\otimes e_{\omega '_{\ab{x}-1}}&=-\mu (x)e_x\\
\end{align*}
when $\omega _{\ab{x}}=\omega '_{\ab{x}}=x$ and $\omega _{\ab{x}-1}\ne\omega '_{\ab{x}-1}$. Moreover,
$\tscripthat _{\omega ,\omega '}e_x\otimes e_y=0$ except for the case
\begin{equation*}
\tscripthat _{\omega ,\omega '}e_x\otimes e_x
  =2i\itim D(\omega _{\ab{x}-1},\omega '_{\ab{x}-1})e_x
\end{equation*}
when $\omega _{\ab{x}}=\omega '_{\ab{x}}=x$.

It is clear that $\tscript _{\omega ,\omega '}$ is a diagonal operator and hence $\tscript _{\omega ,\omega '}$ is a normal operator. We shall show shortly that $\dscript _{\omega ,\omega '}$ is not normal. However, from Theorem~\ref{thm41} we see that $\dscript _{\omega ,\omega '}$ is a type of shift operator. More physically, we see that $\dscript _{\omega ,\omega '}$ can be thought of as a creation operator because it takes $e_x\otimes e_y$ corresponding to $(\ab{x},\ab{y})$ ``particles'' to a scalar multiple of $e_{\omega _{\ab{x}+1}}\otimes e_{\omega _{\ab{y}+1}}$ corresponding to $(\ab{x}+1,\ab{y}+1)$ ``particles''
(Theorem~\ref{thm41}(a),(b)). The next result shows that the adjoint $\dscript _{\omega ,\omega '}^*$ can be thought of as an annihilation operator. We conclude that this formalism not only gives a discrete version of general relativity, there is also emerging a discrete analogue of quantum field theory.

\begin{thm}       
\label{thm42}
{\rm (a)}\enspace If $(x,y)$ is an $(\omega ,\omega ')$ pair but not an $(\omega ',\omega )$ pair, then
\begin{equation*}
\dscript _{\omega ,\omega '}^*e_x\otimes e_y=-\dbar (x,y)e_{\omega _{\ab{x}-1}}\otimes e_{\omega '_{\ab{y}-1}}
\end{equation*}
{\rm (b)}\enspace If $(x,y)$ is an $(\omega ',\omega )$ pair but not an $(\omega ,\omega ')$ pair, then
\begin{equation*}
\dscript _{\omega ,\omega '}^*e_x\otimes e_y=\dbar (x,y)e_{\omega '_{\ab{x}-1}}\otimes e_{\omega _{\ab{y}-1}}
\end{equation*}
{\rm (c)}\enspace If $(x,y)$ is an $(\omega ,\omega ')-(\omega ',\omega )$ pair, then
\begin{equation*}
\dscript _{\omega ,\omega '}^*e_x\otimes e_y=\dbar (x,y)
\sqbrac{e_{\omega '_{\ab{x}-1}}\otimes e_{\omega _{\ab{y}-1}}-e_{\omega _{\ab{x}-1}}\otimes e_{\omega '_{\ab{y}-1}}}
\end{equation*}
\end{thm}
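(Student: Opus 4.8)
The plan is to obtain $\dscript_{\omega,\omega'}^*$ entirely from its matrix entries in the orthonormal basis $S=\brac{e_x\otimes e_y\colon x,y\in\pscript}$, reading those entries off Theorem~\ref{thm41}. Since the span of $S$ is a dense domain on which $\dscript_{\omega,\omega'}$ acts, its Hilbert-space adjoint is determined on that same dense span by
\[
\elbows{\dscript_{\omega,\omega'}^*e_x\otimes e_y,\,e_u\otimes e_v}
=\overline{\elbows{\dscript_{\omega,\omega'}e_u\otimes e_v,\,e_x\otimes e_y}}
\]
for all $x,y,u,v\in\pscript$. Thus the computation reduces to the following: fix the target $e_x\otimes e_y$ in one of the three cases of the statement, find every source $e_u\otimes e_v$ whose image under $\dscript_{\omega,\omega'}$ has a nonzero $e_x\otimes e_y$-component, record that component, and conjugate.

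First I would treat case (a), where $(x,y)$ is an $(\omega,\omega')$ pair but not an $(\omega',\omega)$ pair, so $\omega_{\ab{x}}=x$, $\omega'_{\ab{y}}=y$, while $\omega'_{\ab{x}}=x$ and $\omega_{\ab{y}}=y$ do not hold simultaneously. Feeding $e_u\otimes e_v$ through the three alternatives of Theorem~\ref{thm41}: a case-(a) source sends $e_u\otimes e_v$ to a multiple of $e_{\omega_{\ab{u}+1}}\otimes e_{\omega'_{\ab{v}+1}}$, which equals $e_x\otimes e_y$ only for $u=\omega_{\ab{x}-1}$, $v=\omega'_{\ab{y}-1}$, with coefficient $-D(\omega_{\ab{x}},\omega'_{\ab{y}})=-D(x,y)$; a case-(b) source lands on a multiple of $e_{\omega'_{\ab{u}+1}}\otimes e_{\omega_{\ab{v}+1}}$, matching $e_x\otimes e_y$ only if $\omega'_{\ab{x}}=x$ and $\omega_{\ab{y}}=y$, which the hypothesis forbids; and a case-(c) source produces two terms, the first forbidden for the same reason, the second again equal to $-D(x,y)$ along $e_{\omega_{\ab{x}-1}}\otimes e_{\omega'_{\ab{y}-1}}$. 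Hence the unique nonzero matrix entry is $\elbows{\dscript_{\omega,\omega'}\,e_{\omega_{\ab{x}-1}}\otimes e_{\omega'_{\ab{y}-1}},\,e_x\otimes e_y}=-D(x,y)$ — the same whether that one source vector happens to be a case-(a) or a case-(c) source of Theorem~\ref{thm41} — and conjugating gives $\dscript_{\omega,\omega'}^*e_x\otimes e_y=-\dbar(x,y)\,e_{\omega_{\ab{x}-1}}\otimes e_{\omega'_{\ab{y}-1}}$. Part (b) is the mirror image: only case-(b) and case-(c) sources survive, contributing $+D(x,y)$ along $e_{\omega'_{\ab{x}-1}}\otimes e_{\omega_{\ab{y}-1}}$. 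For part (c), where $(x,y)$ is a double pair, both contributions appear, yielding $\dbar(x,y)\sqbrac{e_{\omega'_{\ab{x}-1}}\otimes e_{\omega_{\ab{y}-1}}-e_{\omega_{\ab{x}-1}}\otimes e_{\omega'_{\ab{y}-1}}}$.

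The work is essentially bookkeeping, and the only delicate point is keeping the case split for the source $e_u\otimes e_v$ synchronized with the case split for the target $e_x\otimes e_y$: in cases (a) and (b) one must check that the single contributing source yields the same coefficient regardless of which Theorem~\ref{thm41} case it falls into (the ``wrong'' term of the case-(c) formula is always killed by the exclusion hypothesis on $(x,y)$), and in case (c) one must handle the degenerate configuration $\omega_{\ab{x}-1}=\omega'_{\ab{x}-1}$, $\omega_{\ab{y}-1}=\omega'_{\ab{y}-1}$, in which the two generic source vectors coalesce into one double-pair source whose two Theorem~\ref{thm41}(c) terms both land on $e_x\otimes e_y$ with opposite signs and cancel — consistent with the stated formula, which then reads $\dbar(x,y)\sqbrac{w-w}=0$. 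A last, purely formal remark: $\dscript_{\omega,\omega'}$ is in general unbounded, so $\dscript_{\omega,\omega'}^*$ means its Hilbert-space adjoint, and the identities above specify it on the dense span of $S$; no further domain analysis is required.
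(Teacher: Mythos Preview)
Your proof is correct and follows essentially the same approach as the paper's: both compute the adjoint via its matrix entries in the basis $S$, using Theorem~\ref{thm41} to read off $\elbows{e_x\otimes e_y,\dscript_{\omega,\omega'}e_u\otimes e_v}$. The paper phrases it as defining the candidate operator $T$ first and then verifying $\elbows{Te_x\otimes e_y,e_u\otimes e_v}=\elbows{e_x\otimes e_y,\dscript_{\omega,\omega'}e_u\otimes e_v}$, whereas you compute the entries directly; your version is in fact slightly more careful, since you explicitly track which case of Theorem~\ref{thm41} the \emph{source} vector $(u,v)$ falls into and confirm that the coefficient comes out the same either way, a point the paper's proof leaves implicit.
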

\begin{proof}
We shall prove Part~(a) and the other parts are similar. Let $T$ be the operator on $L$ satisfying
\begin{equation*}
Te_x\otimes e_y
  =-\dbar (x,y)e_{\omega _{\ab{x}-1}}\otimes e_{\omega '_{\ab{y}-1}}
\end{equation*}
We then have
\begin{align*}
\elbows{Te_x\otimes e_y,e_u\otimes e_v}
  &=-D(x,y)\elbows{e_{\omega _{\ab{x}-1}}\otimes e_{\omega '_{\ab{y}-1}},e_u\otimes e_v}\\
  &=-D(x,y)\delta _{\omega _{\ab{x}-1},u}\delta _{\omega '_{\ab{y}-1},v}
\end{align*}
Now $\omega _{\ab{x}-1}=u$ if and only if $\omega _{\ab{u}+1}=x$ and $\omega '_{\ab{y}-1}=v$ if and only if
$\omega _{\ab{v}+1}=y$. We conclude that
\begin{align*}
\elbows{e_x\otimes e_y,\dscript _{\omega ,\omega '}e_u\otimes e_v}
  &=-D(\omega _{\ab{u}+1},\omega '_{\ab{v}+1})
  \elbows{e_x\otimes e_y,e_{\omega _{\ab{u}+1}}\otimes e_{\omega '_{\ab{v}+1}}}\\
  &=-D(\omega _{\ab{u}+1},\omega '_{\ab{v}+1})\delta _{x,\omega _{\ab{u}+1}}\delta _{y,\omega '_{\ab{v}+1}}\\
  &=-D(x,y)\delta _{\omega _{\ab{x}-1},u}\delta _{\omega '_{\ab{y}-1},v}\\
  &=\elbows{Te_x\otimes e_y,e_u\otimes e_v}
\end{align*}
Hence,
\begin{equation*}
Te_x\otimes e_y=\dscript _{\omega ,\omega '}^*e_x\otimes e_y
\end{equation*}
and the result holds.
\end{proof}

In the case when Theorems~\ref{thm41}(a) and \ref{thm42}(a) are both applicable we have that
\begin{align*}
\dscript _{\omega ,\omega '}\dscript _{\omega ,\omega '}^*e_x\otimes e_y&=\ab{D(x,y)}^2e_x\otimes e_y\\
\intertext{and}
\dscript _{\omega ,\omega '}^*,\dscript _{\omega ,\omega '}e_x\otimes e_y
  &=\ab{D(\omega _{\ab{x}+1},\omega '_{\ab{y}+1})}^2e_x\otimes e_y
\end{align*}
so $\dscript _{\omega ,\omega '}$ and $\dscript _{\omega ,\omega '}^*$ do not commute in general. Again,
$\dscript _{\omega ,\omega '}$ and $\tscript _{\omega ,\omega '}$ do not commute because
\begin{align*}
\dscript _{\omega ,\omega '},\tscript _{\omega ,\omega '}e_x\otimes e_y
  &=-D(\omega _{\ab{x}+1},\omega '_{\ab{y}+1})D(\omega _{\ab{x}-1},\omega '_{\ab{y}-1})
  e_{\omega _{\ab{x}+1}}\otimes e_{\omega '_{\ab{y}+1}}\\
\intertext{while}
\tscript _{\omega ,\omega '}\dscript _{\omega ,\omega '}e_x\otimes e_y
  &=-D(\omega _{\ab{x}+1},\omega '_{\ab{y}+1})D(x,y)e_{\omega _{\ab{x}+1}}\otimes e_{\omega '_{\ab{y}+1}}
\end{align*}

\section{Amplitude Processes} 
Various ways of constructing QSGP have been considered \cite{gud112,gud13}. Here we introduce a method called an amplitude process (AP). Although a QSGP need not be generated by an AP, we shall characterize those that are so generated. In the next section we shall present a concrete realization of an AP in terms of a natural quantum action

A \textit{transition amplitude} is a map $\atilde\colon\pscript\times\pscript\to\complex$ such that $\atilde (x,y)=0$ if $x\not\to y$ and $\sum _y\atilde (x,y)=1$ for all $x\in\pscript$. This is similar to a Markov chain except $\atilde (x,y)$ may be complex. The
\textit{amplitude process} (AP) corresponding to $\atilde$ is given by the maps $a_n\colon\Omega _n\to\complex$ where
\begin{equation*}
a_n(\omega _1\omega _2\cdots\omega _n)
  =\atilde (\omega _1,\omega _2)\atilde (\omega _2,\omega _3)\cdots\atilde (\omega _{n-1},\omega _n)
\end{equation*}
We can consider $a_n$ to be a vector in $H_n=L_2(\Omega _n)$. Notice that
\begin{align*}
\elbows{1_n,a_n}&=\sum _{\omega\in\Omega _n}a_n(\omega )=1\\
\intertext{and}
 \|a_n\|&=\sqrt{\sum _{\omega\in\Omega _n}\ab{a_n(\omega )}^2\,}
\end{align*}
Define the rank-1 positive operator $\rho _n=\ket{a_n}\bra{a_n}$ on $H_n$. The norm of $\rho _n$ is
\begin{equation*}
\|\rho _n\|=\|a_n\|^2=\sum _{\omega\in\Omega _n}\ab{a_n(\omega )}^2
\end{equation*}
Since
\begin{equation*}
\elbows{e_n1_n,1_n}=\ab{\elbows{1_n,a_n}}^2=1
\end{equation*}
we conclude that $\rho _n$ is a probability operator.

The corresponding decoherence functional becomes
\begin{align*}
D_n(A,B)&=\elbows{\rho _n\chi _B,\chi _A}=\elbows{\chi _B,a_n}\elbows{a_n,\chi _A}\\
  &=\sum _{\omega\in A}\overline{a_n}(\omega )\sum _{\omega\in B}a_n(\omega )
\end{align*}
In particular, for $\omega ,\omega '\in\Omega _n$, $D_n(\omega ,\omega ')=\overline{a_n}(\omega )a_n(\omega ')$ are the matrix elements of $\rho _n$ in the standard basis. Defining the complex-valued measure $\nu _n$ on $2^{\Omega _n}$ by
$\nu _n(A)=\sum _{\omega\in A}a_n(\omega )$ we see that
\begin{equation*}
D_n(A,B)=\overline{\nu _n(A)}\nu _n(B)
\end{equation*}
The $q$-measure $\mu _n\colon 2^{\Omega _n}\to\real ^2$ becomes
\begin{equation*}
\mu _n(A)=D_n(A,A)=\ab{\nu (A)}^2=\ab{\sum _{\omega\in A}a_n(\omega )}^2
\end{equation*}
In particular, $\mu _n(\omega )=\ab{a_n(\omega )}^2$ for every $\omega\in\Omega _n$ and $\mu _n(\Omega _n)=1$.

\begin{thm}       
\label{thm51}
The sequence of operators $\rho _n=\ket{a_n}\bra{a_n}$ forms a QSGP.
\end{thm}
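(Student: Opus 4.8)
The plan is to verify the two conditions that define a QSGP for the sequence $\rho_n=\ket{a_n}\bra{a_n}$: that each $\rho_n$ is a probability operator on $H_n$, and that the sequence $\brac{\rho_n}$ is consistent. The first condition has already been checked in the discussion preceding the theorem, where it was shown that $\rho_n\ge 0$ and $\elbows{\rho_n1_n,1_n}=\ab{\elbows{1_n,a_n}}^2=1$. Hence the entire content of the theorem is the consistency identity $D_{n+1}(A\rightarrow,B\rightarrow)=D_n(A,B)$ for all $A,B\subseteq\Omega_n$, where $D_n$ is the decoherence functional of $\rho_n$.

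Using the factored form of the decoherence functional established above, namely $D_n(A,B)=\overline{\nu_n(A)}\,\nu_n(B)$ with $\nu_n(A)=\sum\brac{a_n(\omega)\colon\omega\in A}$, it is enough to prove the single identity
\[
\nu_{n+1}(A\rightarrow)=\nu_n(A)\qquad\text{for all }A\subseteq\Omega_n,
\]
since then $D_{n+1}(A\rightarrow,B\rightarrow)=\overline{\nu_{n+1}(A\rightarrow)}\,\nu_{n+1}(B\rightarrow)=\overline{\nu_n(A)}\,\nu_n(B)=D_n(A,B)$. To prove this identity I would first note that the one-step continuation sets $(\omega\rightarrow)$, $\omega\in A$, are pairwise disjoint, because two $(n+1)$-paths belonging to $(\omega\rightarrow)$ and $(\omega''\rightarrow)$ with $\omega\ne\omega''$ already disagree in one of their first $n$ coordinates; therefore $\nu_{n+1}$, being a genuine complex measure on $2^{\Omega_{n+1}}$, splits as $\nu_{n+1}(A\rightarrow)=\sum\brac{a_{n+1}(\omega_1\cdots\omega_n\omega_{n+1})\colon\omega_1\cdots\omega_n\in A,\ \omega_n\to\omega_{n+1}}$. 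Next I would invoke the multiplicative structure of the amplitude process, $a_{n+1}(\omega_1\cdots\omega_n\omega_{n+1})=a_n(\omega_1\cdots\omega_n)\,\atilde(\omega_n,\omega_{n+1})$, to factor $a_n(\omega_1\cdots\omega_n)$ out of the inner sum over $\omega_{n+1}$. Since $\atilde(x,y)=0$ whenever $x\not\to y$, that inner sum equals $\sum_y\atilde(\omega_n,y)$, which is $1$ by the normalization in the definition of a transition amplitude. Hence $\nu_{n+1}(A\rightarrow)=\sum\brac{a_n(\omega)\colon\omega\in A}=\nu_n(A)$, and the theorem follows.

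I do not expect a real obstacle here; the argument is essentially bookkeeping once the factored forms of $a_{n+1}$ and of $D_n$ are substituted. The only points that warrant a moment's care are the disjointness of the continuation sets $(\omega\rightarrow)$ — which is what lets $\nu_{n+1}$ be evaluated term by term over $\omega\in A$ — and the observation that extending the inner sum from the offspring of $\omega_n$ to all of $\pscript$ is harmless, the extra terms $\atilde(\omega_n,y)$ with $\omega_n\not\to y$ all being zero.
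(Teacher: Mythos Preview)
Your proof is correct and follows essentially the same route as the paper: reduce to consistency, use the factorization $D_n(A,B)=\overline{\nu_n(A)}\,\nu_n(B)$, and then exploit the multiplicative form $a_{n+1}(\omega\omega_{n+1})=a_n(\omega)\,\atilde(\omega_n,\omega_{n+1})$ together with the normalization $\sum_y\atilde(\omega_n,y)=1$ to collapse the sum over continuations. Your explicit remarks on the disjointness of the sets $(\omega\rightarrow)$ and on extending the inner sum to all of $\pscript$ are points the paper leaves implicit, but the argument is otherwise the same.
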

\begin{proof}
We only need to show that $\brac{\rho _n}$ is a consistent sequence. Using the notation
$\omega\omega _{n+1}=\omega _1\omega _2\ldots\omega _n\omega _{n+1}$, for $A,B\in 2^{\Omega _n}$ we have
\begin{align*}
D_{n+1}(A,B)&=\sum _{\omega\in A\rightarrow}\overline{a_n(\omega )}\sum _{\omega\in B\rightarrow}a_n(\omega )\\
&=\sum\brac{\overline{a_n}(\omega\omega _{n+1})\colon\omega\in A,\omega _n\to\omega _{n+1}}\\
  &\quad\times\sum\brac{a_n(\omega\omega _{n+1})\colon\omega\in B,\omega _n\to\omega _{n+1}}\\
  &=\sum\brac{\overline{a_n}(\omega )\overline{\atilde}(\omega _n,\omega _{n+1})\colon
  \omega\in A,\omega _n\to\omega _{n+1}}\\
  &\quad\times\sum\brac{a_n(\omega )\atilde (\omega _n,\omega _{n+1})\colon\omega\in B,\omega _n\to\omega _{n+1}}\\
  &=\sum _{\omega\in A}\overline{a_n}(\omega )\sum _{\omega\in B}a_n(\omega )=D_n(A,B)
\end{align*}
The result follows.
\end{proof}

Theorem~\ref{thm51} shows that an AP $\brac{a_n}$ \textit{generates} a QSGP $\brac{\rho _n}$. Not all QSGP are generated by an AP and the next theorem characterizes those that are so generated.

\begin{thm}       
\label{thm52}
A QSGP $\brac{\rho _n}$ is generated by an AP if and only if $\rho _n=\ket{a_n}\bra{a_n}$, $a_n\in H_n$, are rank-1 operators and if $\omega =\omega _1,\ldots\omega _n$, $\omega '=\omega '_1\ldots\omega ' _n\in\Omega _n$ with
$\omega _n=\omega '_n$ and $\omega _n\to x$, then
\begin{equation}         
\label{eq51}
a_n(\omega ')a_{n+1}(\omega x)=a_n(\omega )a_{n+1}(\omega 'x)
\end{equation}
\end{thm}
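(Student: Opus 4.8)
The plan is to prove the two implications separately; the forward one is routine and the converse is where the work lies. For the forward implication, suppose $\brac{\rho _n}$ is generated by the AP of a transition amplitude $\atilde$. Then $\rho _n=\ket{a_n}\bra{a_n}$ is rank-1, with $a_n(\omega _1\cdots\omega _n)=\atilde (\omega _1,\omega _2)\cdots\atilde (\omega _{n-1},\omega _n)$, and from the product form one reads off the one-step recursion $a_{n+1}(\omega x)=a_n(\omega )\atilde (\omega _n,x)$. Hence, if $\omega ,\omega '\in\Omega _n$ satisfy $\omega _n=\omega '_n$ and $\omega _n\to x$, then
\[
a_n(\omega ')a_{n+1}(\omega x)=a_n(\omega )a_n(\omega ')\atilde (\omega _n,x)=a_n(\omega )a_{n+1}(\omega 'x),
\]
where the last step uses $\omega '_n=\omega _n$; this is \eqref{eq51}.

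For the converse, assume the $\rho _n=\ket{a_n}\bra{a_n}$ are rank-1 and \eqref{eq51} holds. First I would normalize the $a_n$. Since $\rho _n$ is a probability operator, $\ab{\sum _{\omega\in\Omega _n}a_n(\omega )}=1$, so after multiplying each $a_n$ by a suitable unit scalar — which leaves $\rho _n$ unchanged and, because each side of \eqref{eq51} scales linearly under $a_n\mapsto\lambda a_n$ and under $a_{n+1}\mapsto\mu a_{n+1}$, also leaves \eqref{eq51} valid — we may assume $\sum _{\omega\in\Omega _n}a_n(\omega )=1$ for all $n$. Next I would extract a marginal identity from consistency of the QSGP. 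Writing $\nu _n(A)=\sum _{\omega\in A}a_n(\omega )$, consistency gives $\overline{\nu _{n+1}(A\rightarrow )}\,\nu _{n+1}(B\rightarrow )=\overline{\nu _n(A)}\,\nu _n(B)$ for all $A,B\subseteq\Omega _n$; taking $A=\brac{\omega}$, $B=\Omega _n$ and using $\Omega _n\rightarrow=\Omega _{n+1}$ and the normalization, we get
\[
\sum\brac{a_{n+1}(\omega y)\colon\omega _n\to y}=a_n(\omega )\qquad\text{for every }\omega\in\Omega _n .
\]

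Now define $\atilde$ by $\atilde (x,y)=0$ when $x\not\to y$, and, when $x\to y$ with $\ab{x}=n$, by $\atilde (x,y)=a_{n+1}(\omega y)/a_n(\omega )$ for any $n$-path $\omega$ with $\omega _n=x$ and $a_n(\omega )\ne 0$. Dividing \eqref{eq51} by $a_n(\omega )a_n(\omega ')$ shows this is independent of the choice of $\omega$, and the marginal identity shows $\sum _y\atilde (x,y)=1$, so $\atilde$ is a transition amplitude. It then remains to prove, by induction on $n$, that $a_n(\omega _1\cdots\omega _n)=\atilde (\omega _1,\omega _2)\cdots\atilde (\omega _{n-1},\omega _n)$: for $n=1$ this holds because $\Omega _1$ is a single one-path on which the normalized $a_1$ equals $1$, and the inductive step is exactly the relation $a_{n+1}(\omega x)=a_n(\omega )\atilde (\omega _n,x)$, which holds by the definition of $\atilde$ whenever $a_n(\omega )\ne 0$.

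The step I expect to be the main obstacle is the treatment of paths $\omega$ with $a_n(\omega )=0$, which is needed both to define $\atilde$ at every causet and to run the inductive step there. If some $n$-path $\omega ^*$ with $\omega ^*_n=\omega _n$ has $a_n(\omega ^*)\ne 0$, then \eqref{eq51} applied to the pair $\omega ,\omega ^*$ forces $a_{n+1}(\omega x)=0$, which agrees with $a_n(\omega )\atilde (\omega _n,x)=0$, so nothing goes wrong. The residual case, in which $a_n$ vanishes on every $n$-path ending at a fixed causet, must be handled by feeding the marginal identity $\sum _y a_{n+1}(\omega y)=a_n(\omega )=0$ into \eqref{eq51} at the following level (or by restricting from the outset to non-degenerate amplitudes, for which $\atilde (x,y)\ne 0$ whenever $x\to y$ and such causets never arise); pinning down this case cleanly is the only genuinely delicate point in the argument.
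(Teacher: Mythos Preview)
Your approach is essentially the same as the paper's: define $\atilde(x,y)$ as the ratio $a_{n+1}(\omega y)/a_n(\omega)$ for a path $\omega$ with $\omega_n=x$ and $a_n(\omega)\ne 0$, use \eqref{eq51} for well-definedness, use consistency to get $\sum_y\atilde(x,y)=1$, and then unwind the recursion to recover the product formula for $a_n$. The only substantive difference is that you make explicit the phase normalization $\sum_\omega a_n(\omega)=1$ (which the paper tacitly uses in its computation of $\sum_y\atilde(x,y)$), and you flag the degenerate case---all $n$-paths ending at a given $x$ having $a_n=0$---as the delicate residual point. The paper handles that case by simply declaring $\atilde(x,y)=0$, without checking that the product formula and the unit-sum condition survive there; so your hesitation is well placed, but your argument is at least as complete as the paper's own.
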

\begin{proof}
If $\brac{a_n}$ is an AP, then it is straightforward to show that \eqref{eq51} holds. Conversely, suppose
$\rho _n=\ket{a_n}\bra{a_n}$ is a QSGP and $\rho _n=\ket{a_n}\bra{a_n}$ where \eqref{eq51} holds. We now show that
$\brac{a_n}$ is an AP. If $x,y,\in\pscript$ with $x\to y$, suppose $\ab{x}=n$ and let $\omega\in\Omega _{n+1}$ with
$\omega =\omega _1\cdots\omega _n\omega _{n+1}$ where $\omega _n= x$, $\omega _{n+1}=y$. If $a_n(\omega ')=0$ for every $\omega '\in\Omega _n$ with $\omega '_n=x$ define $\atilde (x,y)=0$. Otherwise, we can assume that
$a_n(\omega _1\cdots\omega _n)\ne 0$ and define
\begin{equation}         
\label{eq52}
\atilde (x,y)=\frac{a_{n+1}(\omega _1\cdots\omega _ny)}{a_n(\omega _1\cdots\omega _n)}
\end{equation}
By \eqref{eq51} this definition is independent of $\omega\in\Omega _n$ with $\omega _n=x$. If $x\not\to y$ we define
$\atilde (x,y)=0$. Since $\brac{\rho _n}$ is consistent we have
\begin{align*}
\sum _y\atilde (x,y)&=\sqbrac{a_n(\omega _1\cdots\omega _n)}^{-1}\sum _ya_n(\omega _1\cdots\omega _ny)\\
  &=\sqbrac{a_n(\omega _1\cdots\omega _n)}^{-1}
  \elbows{\rho _{n+1}\chi _{(\omega\rightarrow )},\chi _{(\Omega _n\rightarrow)}}\\
  &=\sqbrac{a_n(\omega _1\cdots\omega _n)}^{-1}\elbows{\rho _n\chi _{\brac{\omega}},\chi _{\Omega _n}}=1
\end{align*}
so $\atilde$ is a transition amplitude. Applying \eqref{eq52} we have
\begin{align*}
a_n(\omega _1\cdots\omega _n)&=a_{n-1}(\omega _1\cdots\omega _{n-1})\atilde (\omega _{n-1},\omega _n)\\
  &=a_{n_2}(\omega _1\cdots\omega _{n-2})\atilde (\omega _{n-2},\omega _{n-1})\atilde (\omega _{n-1},\omega _n)\\
  &\ \ \vdots\\
  &=\atilde (\omega _1,\omega _2)\atilde (\omega _2,\omega _3)\cdots\atilde (\omega _{n-1},\omega _n)
\end{align*}
It follows that $\brac{a_n}$ is derived from the transition amplitude $\atilde$ so $\brac{a_n}$ is an AP.
\end{proof}

\section{Quantum Action} 
We now present a specific example of an AP that arises from a natural quantum action. For $x\in\pscript$, the \textit{height} $h(x)$ of $x$ is the cardinality of a longest chain in $x$. The \textit{width} $w(x)$ of $x$ is the cardinality of a largest antichain in $x$. Finally, the \textit{area} $A(x)$ of $x$ is given by $A(x)=h(x)w(x)$. Roughly speaking, $h(x)$ corresponds to an internal time in $x$, $w(x)$ corresponds to the mass or energy of $x$ \cite{gud121} and $A(x)$ corresponds to an action for $x$. If
$x\to y$, then $h(y)=h(x)$ or $h(x)+1$ and $w(y)=w(x)$ or $w(x)+1$. In the case $h(y)=h(x)+1$ we call $y$ a
\textit{height offspring} of $x$, in the case $w(y)=w(x)+1$ we call $y$ a \textit{width offspring} of $x$ and if both $h(y)=h(x)$,
$w(y)=w(x)$ hold, we call $y$ a \textit{mild offspring} of $x$. Let $H(x)$, $W(x)$ and $M(x)$ be the sets of height, width and mild offspring of $x$, respectively. It is shown in \cite{gud13} that $H(x)$, $W(x)$, $M(x)$ partition the set $(x\rightarrow )$. It is easy to see that $H(x)\ne\emptyset$, $W(x)\ne\emptyset$ but examples show that $M(x)$ can be empty.

If $x\to y$ we have the following possibilities: $y\in M(x)$ in which case $A(y)-A(x)=0$, $y\in H(x)$ in which case
\begin{equation*}
A(y)-A(x)=\sqbrac{h(x)+1}w(x)-h(x)w(x)=w(x)
\end{equation*}
and $y\in W(x)$ in which case
\begin{equation*}
A(y)-A(x)=h(x)\sqbrac{w(x)+1}-h(x)w(x)=h(x)
\end{equation*}
We now define the transition amplitude $\atilde (x,y)$ in terms of the ``action'' change for $x$ to $y$. We first define the \textit{partition function}
\begin{equation*}
z(x)=\sum _y\brac{m(x\to y)e^{2\pi i\sqbrac{A(y)-A(x)}/\ab{x}}\colon x\to y}
\end{equation*}
If $x\not\to y$ define $\atilde (x,y)=0$. If $x\to y$ and $z(x)=0$ define $\atilde (x,y)=\sqbrac{(x\rightarrow )}^{-1}$ where
$\sqbrac{(x\rightarrow )}$ means the cardinality of $(x\rightarrow )$ including multiplicity. If $x\to y$ and $z(x)\ne 0$ define
\begin{equation*}
\atilde (x,y)=\frac{m(x\to y)}{z(x)}\,e^{2\pi i\sqbrac{A(y)-A(x)}/\ab{x}}
\end{equation*}
As before, we have three possibilities when $z(x)\ne 0$. If $y\in M(x)$, then $\atilde (x,y)=m(x\to y)\sqbrac{z(x)}^{-1}$, if
$y\in H(x)$ then
\begin{equation*}
\atilde (x,y)=\frac{m(x\to y)}{z(x)}\,e^{2\pi iw(x)/\ab{x}}
\end{equation*}
and if $y\in W(x)$, then
\begin{equation*}
\atilde (x,y)=\frac{m(x\to y)}{z(x)}\,e^{2\pi ih(x)/\ab{x}}
\end{equation*}
Notice that for $x\in\pscript _n$ we have
\begin{equation*}
z(x)=\sqbrac{M(x)}+\sqbrac{H(x)}e^{2\pi iw(x)/n}+\sqbrac{W(x)}e^{2\pi ih(x)/n}
\end{equation*}

We now illustrate this theory by checking the first three steps in Figure~1. Since $M(x_1)=\emptyset$, $H(x)=\brac{x_2}$,
$W(x_1)=\brac{x_3}$ and $w(x_1)=h(x_1)=1$ we have $z(x_1)=2e^{2\pi i}=2$. Since $M(x_2)=\emptyset$,
$H(x_2)=\brac{x_4}$, $W(x_2)=\brac{x_5,x_6}$ and $w(x_2)=1$, $h(x_2)=2$ we have
\begin{equation*}
z(x_2)=e^{2\pi i/2}+2e^{2\pi i}=-1+2=1
\end{equation*}
Since $M(x_3)=\emptyset$, $H(x_3)=\brac{x_6,x_7}$, $W(x_3)=\brac{x_8}$, $w(x_3)=2$, $h(x_3)=1$ and
$\sqbrac{H(x_3)}=3$ we have
\begin{equation*}
z(x_3)=3e^{2\pi i}+e^{2\pi i/2}=3-1=2
\end{equation*}
There are six 3-paths in $\Omega _3$: $\gamma _1=x_1x_2x_4$, $\gamma _2=x_1x_2x_5$, $\gamma _3=x_1x_2x_6$,
$\gamma _4=x_1x_3x_6$, $\gamma _5=x_1x_3x_7$ and $\gamma _8=x_1x_3x_8$. The amplitudes of the paths become:
\begin{align*}
a_3(\gamma _1)&=\atilde (x_1,x_2)\atilde (x_2,x_4)=\tfrac{1}{2}e^{2\pi i}e^{\pi i}=-\tfrac{1}{2}\\
a_3(\gamma _2)&=\atilde (x_1,x_2)\atilde (x_2,x_5)=\tfrac{1}{2}e^{2\pi i}e^{2\pi i}=\tfrac{1}{2}\\
a_3(\gamma _3)&=\atilde (x_1,x_2)\atilde (x_2,x_6)=\tfrac{1}{2}e^{2\pi i}e^{2\pi i}=\tfrac{1}{2}\\
a_3(\gamma _4)&=\atilde (x_1,x_3)\atilde (x_3,x_6)=\tfrac{1}{2}e^{2\pi i}e^{2\pi i}=\tfrac{1}{2}\\
a_3(\gamma _5)&=\atilde (x_1,x_3)\atilde (x_3,x_7)=\tfrac{1}{2}e^{2\pi i}\tfrac{1}{2}e^{2\pi i}=\tfrac{1}{4}\\
a_3(\gamma _6)&=\atilde (x_1,x_3)\atilde (x_3,x_8)=\tfrac{1}{2}e^{2\pi i}\tfrac{1}{2}e^{\pi i}=-\tfrac{1}{4}
\end{align*}
Notice that $\sum a(\gamma _i)=1$ as it must. The amplitude decoherence matrix has components
$D_3(\gamma _i,\gamma _j)=\overline{a(\gamma _i)}a(\gamma _j)$, $i,j=1,\ldots ,6$. The $q$-measures
$\mu _3(\gamma _i)=\ab{a(\gamma _i)}^2$, $i=1,\ldots 6$, are given by
\begin{align*}
\mu _3(\gamma _1)&=\mu _3(\gamma _2)=\mu _3(\gamma _3)=\mu _3(\gamma _4)=1/4\\
\mu _3(\gamma _5)&=\mu _3(\gamma _6)=1/16
\end{align*}
Interference effects are evident if we consider the $q$-measures of various sets of sites (causets). For example,
\begin{align*}
\mu _3(\brac{x_6})&=\ab{\tfrac{1}{2}+\tfrac{1}{2}}^2=1\\
\mu _3(\brac{x_4,x_5})&=\ab{-\tfrac{1}{2}+\tfrac{1}{2}}^2=0\\
\mu _3(\brac{x_5,x_6})&=\ab{\tfrac{3}{2}}^2=9/4
\end{align*}
We can easily compute the amplitudes for the sites $x_1,\ldots ,x_8$ to get $a(x_1)=1$ by convention and
$a(x_2)=a(x_3)=a(x_5)=\tfrac{1}{2},a(x_4)=-\tfrac{1}{2}$, $a(x_6)=1$, $a(x_7)=\tfrac{1}{4}$, $a(x_8)=-\tfrac{1}{4}$. The site decoherence matrix is the positive semidefinite matrix with components $D(x_i,x_j)=\overline{a(x_i)}a(x_j)$, $i,j=1,\ldots ,8$.

For this particular AP we conjecture that $\brac{\omega}\in\sscript (\Omega )$ for every $\omega\in\Omega$ and
$\mu\paren{\brac{\omega}}=0$. Moreover, we conjecture that $\brac{\omega}'\in\sscript (\Omega )$ for every
$\omega\in\Omega$ and $\mu\paren{\brac{\omega}'}=1$. Since
\begin{equation*}
\mu _n\paren{\brac{\omega}^n}=\ab{a_n\paren{\brac{\omega}^n}}^2=\paren{\prod _{j=1}^{n-1}\ab{z(\omega _j)}^2}^{-1}
\end{equation*}
$\mu _n\paren{\brac{\omega}}=0$ would follow from
\begin{equation}         
\label{eq61}
\lim _{\ab{x}\to 0}\brac{\ab{z(x)}\colon x\in\pscript}=\infty
\end{equation}
We also conjecture that \eqref{eq61} holds. Notice that $\brac{\omega}'\in\sscript (\Omega )$ would follow from
$\brac{\omega}\in\sscript (\Omega )$ and $\mu\paren{\brac{\omega}}=0$ because
\begin{align*}
\mu _n\paren{{\brac{\omega }'}^{\,n}}&=\ab{a_n\paren{{\brac{\omega}'}^{\,n}}}^2=\ab{1-a_n\paren{\brac{\omega}^n}}^2\\
&=1+\ab{a_n\paren{\brac{\omega}^n}}^2-2\itre a_n\paren{\brac{\omega}^n}
\end{align*}
If $\mu\paren{\brac{\omega}}=0$, then $\lim\ab{a_n\paren{\brac{\omega}^n}^2}=0$ so $\lim a_n\paren{\brac{\omega}^n}=0$ and hence, $\lim\mu _n\paren{{\brac{\omega}'}^{\,n}}=1$.

We can prove our conjectures in two extreme cases. Let $\omega\in\Omega$ be the path
$\omega =\omega _1\omega _2\cdots$ for which $\omega _n$ is a chain, $n=1,2,\ldots\,$. Then
$\sqbrac{H(\omega _n)}=1$, $\sqbrac{W(\omega _n)}=n$ and $\sqbrac{M(\omega _n)}=0$, $n=1=,2,\ldots\,$. Then
$\omega ^n=\brac{\omega}^n=\omega _1\omega _2\ldots\omega _n$ and for $j=1,\ldots ,n$ we have
\begin{equation*}
z(\omega _j)=e^{2\pi i/j}+je^{2\pi ij/j}=e^{2\pi i/j}+j
\end{equation*}
Hence,
\begin{equation*}
\ab{z(\omega _j)}=\ab{e^{2\pi i/j}+j}\ge\ab{j}-\ab{e^{2\pi i/j}}=j-1
\end{equation*}
We then have
\begin{equation*}
\ab{z(\omega ^n)}=\ab{z(\omega _1)\cdots z(\omega _{n-1})}\ge\prod _{j=1}^{n-2}j
\end{equation*}
Hence, $\lim\limits _{n\to\infty}\ab{z(\omega ^n)}=\infty$ so $\brac{\omega}\in\sscript (\Omega )$ and
$\mu\paren{\brac{\omega}}=0$.

The other extreme case is when $\omega =\omega _1\omega _2\cdots$ and each $\omega _n$ is an antichain. Then
$\sqbrac{H(\omega _n)}=2^{n-1}$, $\sqbrac{W(\omega _n)}=1$ and $\sqbrac{M(\omega _n)}=0$, $n=1,2,\ldots\,$. Again,
$\omega ^n=\omega _1\omega _2\cdots\omega _n$ and for $j=1,\ldots ,n$ we obtain
\begin{equation*}
z(\omega _j)=(2^j-1)e^{2\pi i}+e^{2\pi i/j}=2^j-1+e^{2\pi i/j}
\end{equation*}
Hence, $\ab{z(\omega _j)}\ge 2^j-2$ and we obtain
\begin{equation*}
\ab{z(\omega ^n)}=\ab{z(\omega _1)\ldots z(\omega _{n-1})}\ge\prod _{j=1}^{n-2}(2^n-2)
\end{equation*}
Again, we conclude that $\lim\limits _{n\to\infty}\ab{z(\omega ^n)}=\infty$ so $\brac{\omega}\in\sscript (\Omega )$ with
$\mu\paren{\brac{\omega}}=0$. As noted before, in both these extreme cases we have $\brac{\omega}'\in\sscript (\Omega )$ and $\mu\paren{\brac{\omega}'}=1$.

\section{Classical Processes} 
A QSPG $\brac{\rho _n}$ is \textit{classical} if the decoherence matrix
\begin{equation*}
D_n(\omega ,\omega ')=\elbows{\rho _n\chi _{\brac{\omega '}},\chi _{\omega}}
\end{equation*}
is diagonal for all $n$. As an example, let $\atilde\colon\pscript\times\pscript\to\real$ be a real-valued transition amplitude and define the AP $a_n\colon\Omega _n\to\real$ as in Section~5. Defining the diagonal operators
\begin{equation*}
\rho _n(\omega ,\omega ')=a_n(\omega )\delta _{\omega ,\omega '}
\end{equation*}
we conclude that $\brac{\rho _n}$ is a classical QSGP.

\begin{thm}       
\label{thm71}
The following statements are equivalent.
{\rm (a)}\enspace $\brac{\rho _n}$ is classical.
{\rm (b)}\enspace $D_n(A,B)=\mu _n(A\cap B)$ for all $A,B\subseteq\Omega _n$.
{\rm (c)}\enspace $D_n(A,B)=0$ if $A\cap B=\emptyset$.
\end{thm}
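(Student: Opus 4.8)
The plan is to establish the cyclic chain of implications $\mathrm{(a)}\Rightarrow\mathrm{(b)}\Rightarrow\mathrm{(c)}\Rightarrow\mathrm{(a)}$. The single ingredient needed throughout is the expansion of the decoherence functional over singletons. Since $\chi _A=\sum _{\omega\in A}\chi _{\brac{\omega}}$ with real (indeed $\brac{0,1}$-valued) coefficients, and since $A\mapsto D_n(A,B)$ and $B\mapsto D_n(A,B)$ are (finitely additive) complex measures on $2^{\Omega _n}$ as recorded in Section~3, we have
\begin{equation*}
D_n(A,B)=\sum _{\omega\in A}\sum _{\omega '\in B}D_n(\omega ,\omega '),\qquad
\mu _n(A)=\sum _{\omega\in A}\sum _{\omega '\in A}D_n(\omega ,\omega ').
\end{equation*}
Once this is in hand, each implication is a one-line manipulation.

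For $\mathrm{(a)}\Rightarrow\mathrm{(b)}$: if $D_n(\omega ,\omega ')=0$ whenever $\omega\ne\omega '$, both double sums above collapse to their diagonal terms, so $D_n(A,B)=\sum\brac{D_n(\omega ,\omega )\colon\omega\in A\cap B}=\mu _n(A\cap B)$, the last equality being the same collapse applied to $A\cap B$. For $\mathrm{(b)}\Rightarrow\mathrm{(c)}$: if $A\cap B=\emptyset$ then $D_n(A,B)=\mu _n(\emptyset )=D_n(\emptyset ,\emptyset )=0$. For $\mathrm{(c)}\Rightarrow\mathrm{(a)}$: apply $\mathrm{(c)}$ to the disjoint singletons $A=\brac{\omega}$, $B=\brac{\omega '}$ with $\omega\ne\omega '$ to obtain $D_n(\omega ,\omega ')=0$; since $n$ is arbitrary, the decoherence matrix is diagonal for all $n$, i.e.\ $\brac{\rho _n}$ is classical.

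There is no genuine obstacle here; the only point requiring a moment's attention is the sesquilinearity of the inner product in the expansion of $D_n(A,B)$, which causes no trouble precisely because the coefficients of $\chi _A$ in the singleton basis are real. If one prefers to bypass that expansion altogether, the implications $\mathrm{(b)}\Rightarrow\mathrm{(c)}$ and $\mathrm{(c)}\Rightarrow\mathrm{(a)}$ are immediate as written, and $\mathrm{(a)}\Rightarrow\mathrm{(b)}$ can instead be read off by representing $\rho _n$ in the standard basis of $H_n$ as the diagonal matrix with entries $D_n(\omega ,\omega )$ and computing $\elbows{\rho _n\chi _B,\chi _A}$ entrywise.
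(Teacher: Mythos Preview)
Your proof is correct and follows essentially the same route as the paper: the cyclic chain $\mathrm{(a)}\Rightarrow\mathrm{(b)}\Rightarrow\mathrm{(c)}\Rightarrow\mathrm{(a)}$, with the key step being the expansion of $D_n(A,B)$ as a double sum over singletons so that diagonality collapses it to $\sum _{\omega\in A\cap B}D_n(\omega ,\omega )=\mu _n(A\cap B)$. The paper's argument is line-for-line the same, only without your preliminary remark isolating the expansion formula and your closing comments on sesquilinearity and the alternative matrix computation.
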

\begin{proof}
(a)$\Rightarrow$(b)\enspace If $\rho _n$ is diagonal, then $D_n(\omega ,\omega ')=0$ for $\omega\ne\omega '$. Hence,
\begin{align*}
D_n(A,B)&=\elbows{\rho _n\chi_B,\chi _A}
  =\elbows{\rho _n\sum _{\omega '\in B}\chi _{\brac{\omega '}},\sum _{\omega\in A}\chi _{\brac{\omega}}}\\
  &=\sum _{\omega\in A}\sum _{\omega '\in B}\elbows{\rho _n\chi _{\brac{\omega '}},\chi _{\brac{\omega}}}
  =\sum _{\omega\in A}\sum _{\omega '\in B}D_n(\omega ,\omega ')\\
  &=\sum _{\omega\in A\cap B}D_n(\omega ,\omega )=D_n(A\cap B,A\cap B)=\mu _n(A\cap B)\\
\end{align*}
(b)$\Rightarrow$(c)\enspace If $D_n(A,B)=\mu _n(A\cap B)$ then $A\cap B=\emptyset$
\begin{equation*}
D_n(A,B)=\mu _n(\emptyset )=0
\end{equation*}
(c)$\Rightarrow$(a)\enspace If (c) holds and $\omega\ne\omega '$ then $\brac{\omega}\cap\brac{\omega'}=\emptyset$ so
$D_n(\omega ,\omega ')=0$.
\end{proof}

\begin{cor}       
\label{cor72}
If $\brac{\rho _n}$ is classical, then $\mu _n$ is a measure, $n=1,2,\ldots\,$.
\end{cor}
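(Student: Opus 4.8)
The plan is to reduce the assertion ``$\mu_n$ is a measure'' to finite additivity on $2^{\Omega_n}$ — which is all that is needed, since $2^{\Omega_n}$ is a finite $\sigma$-algebra — and to obtain that additivity by expanding $D_n$ in each of its two arguments and then discarding the cross terms by means of Theorem~\ref{thm71}(c).

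First I would record the trivial points: $\mu_n(\emptyset)=D_n(\emptyset,\emptyset)=0$, and $\mu_n(A)=D_n(A,A)=\elbows{\rho_n\chi_A,\chi_A}\ge 0$ for every $A\subseteq\Omega_n$ because $\rho_n$ is a positive operator (equivalently, because the $1\times 1$ matrix $[D_n(A,A)]$ is positive semidefinite, a stated property of a decoherence functional). For the additivity step, let $A,B\subseteq\Omega_n$ with $A\cap B=\emptyset$. Since $C\mapsto D_n(C,A\cup B)$, and likewise $C\mapsto D_n(A,C)$ and $C\mapsto D_n(B,C)$, are complex measures on $2^{\Omega_n}$, one may expand
\[
\mu_n(A\cup B)=D_n(A\cup B,A\cup B)=D_n(A,A)+D_n(A,B)+D_n(B,A)+D_n(B,B).
\]
Because $\brac{\rho_n}$ is classical and $A\cap B=\emptyset$, Theorem~\ref{thm71}(c) gives $D_n(A,B)=D_n(B,A)=0$, so the right-hand side collapses to $\mu_n(A)+\mu_n(B)$. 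Iterating over finitely many pairwise disjoint sets yields finite additivity; combined with $\mu_n(\emptyset)=0$, nonnegativity, and $\mu_n(\Omega_n)=1$, this shows $\mu_n$ is a probability measure on $(\Omega_n,2^{\Omega_n})$, and finite additivity is automatically countable additivity since $\Omega_n$ is finite.

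I do not expect a genuine obstacle here; the only care needed is to invoke the right clause of Theorem~\ref{thm71} — part (c) is precisely what annihilates the interference cross terms — and to justify the bilinear expansion of $D_n(A\cup B,A\cup B)$, which is exactly the complex-measure property of $D_n$ in each slot noted just after the definition of $D_\rho$. One could equally run the computation through Theorem~\ref{thm71}(b), applying it to each of the four pieces $D_n(A,A)$, $D_n(A,B)$, $D_n(B,A)$, $D_n(B,B)$ to obtain $\mu_n(A)+\mu_n(\emptyset)+\mu_n(\emptyset)+\mu_n(B)$; this is the same few-line argument.
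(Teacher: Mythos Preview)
Your argument is correct and follows essentially the same route as the paper: expand $D_n(A\cup B,A\cup B)$ bilinearly and use Theorem~\ref{thm71}(c) to kill the cross terms $D_n(A,B)$ and $D_n(B,A)$. The paper's proof is terser (it records the cross terms as $2\itre D_n(A,B)$ before invoking Theorem~\ref{thm71}), but the substance is identical.
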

\begin{proof}
To show that $\mu _n$ is a measure, suppose that $A\cap B=\emptyset$. By Theorem~\ref{thm71} we have
\begin{align*}
\mu _n(A\cup B)&=D_n(A\cup B,A\cup B)=D_n(A,A\cup B)=D_n(B,A\cup B)\\
  &=D_n(A,A)+D_n(B,B)+2\itre D_n(A,B)=\mu _n(A)+\mu _n(B)\quad\qedhere
\end{align*}
\end{proof}

We now give a more general condition. A QSGP $\brac{\rho _n}$ is \textit{semiclassical} if $\itre D_n(\omega ,\omega ')$ is diagonal. The fact that $\mu _n$ is a measure, $n=1,2,\ldots$, does not imply that $\brac{\rho _n}$ is classical. However, we do have the following result whose proof is similar to that of Theorem~\ref{thm71} and Corollary~\ref{cor72}.

\begin{thm}       
\label{thm73}
The following statements are equivalent.
{\rm (a)}\enspace $\brac{\rho _n}$ is semiclassical.
{\rm (b)}\enspace $\itre D_n(A,B)=\mu _n(A\cap B)$.
{\rm (c)}\enspace $\itre D_n(A,B)=0$ if $A\cap B=\emptyset$.
{\rm (d)}\enspace $\mu _n$ is a measure.
\end{thm}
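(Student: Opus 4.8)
The plan is to mimic the cycle of implications used in Theorem~\ref{thm71} and Corollary~\ref{cor72}, but everywhere replacing $D_n$ by its real part $\itre D_n$. The key observation that makes this work is that the $q$-measure $\mu_n(A)=D_n(A,A)$ is real-valued, so $\mu_n(A)=\itre D_n(A,A)$; this lets us freely pass between $\mu_n$ and the real part of the decoherence functional. I would establish the equivalences in the cyclic order (a)$\Rightarrow$(b)$\Rightarrow$(c)$\Rightarrow$(d)$\Rightarrow$(a).

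For (a)$\Rightarrow$(b): assuming $\itre D_n(\omega,\omega')$ is diagonal, expand $\chi_A=\sum_{\omega\in A}\chi_{\brac{\omega}}$ and $\chi_B=\sum_{\omega'\in B}\chi_{\brac{\omega'}}$ exactly as in the proof of Theorem~\ref{thm71}, use bilinearity of $D_n$ to write $D_n(A,B)=\sum_{\omega\in A}\sum_{\omega'\in B}D_n(\omega,\omega')$, take real parts, and discard all off-diagonal terms by hypothesis; what survives is $\sum_{\omega\in A\cap B}\itre D_n(\omega,\omega)=\sum_{\omega\in A\cap B}\mu_n(\omega)$. To finish I need $\sum_{\omega\in A\cap B}\mu_n(\omega)=\mu_n(A\cap B)$, i.e.\ that $\mu_n$ is additive over singletons once the off-diagonal real parts vanish; this follows from the same diagonality applied to the set $A\cap B$, since $\mu_n(A\cap B)=\itre D_n(A\cap B,A\cap B)=\sum_{\omega,\omega'\in A\cap B}\itre D_n(\omega,\omega')=\sum_{\omega\in A\cap B}\mu_n(\omega)$. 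Implication (b)$\Rightarrow$(c) is immediate: if $A\cap B=\emptyset$ then $\itre D_n(A,B)=\mu_n(\emptyset)=0$.

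For (c)$\Rightarrow$(d): given disjoint $A,B$, use grade-2 additivity in the guise already exploited in Corollary~\ref{cor72}. Writing $D_n(A\cup B,A\cup B)=D_n(A,A)+D_n(B,B)+D_n(A,B)+D_n(B,A)$ and noting $D_n(B,A)=\overline{D_n(A,B)}$, we get $\mu_n(A\cup B)=\mu_n(A)+\mu_n(B)+2\itre D_n(A,B)=\mu_n(A)+\mu_n(B)$ by (c); together with $\mu_n(\emptyset)=0$ this says $\mu_n$ is finitely additive, hence (since $\Omega_n$ is finite) a measure. For the closing implication (d)$\Rightarrow$(a): apply additivity of $\mu_n$ to the disjoint pair $\brac{\omega},\brac{\omega'}$ with $\omega\ne\omega'$, expand $\mu_n(\brac{\omega}\cup\brac{\omega'})$ as above, and conclude $2\itre D_n(\omega,\omega')=0$, i.e.\ $\itre D_n$ is diagonal.

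The only place that requires a little care—and hence the main obstacle—is the argument inside (a)$\Rightarrow$(b) that the diagonal real parts of singletons sum correctly to $\mu_n$ of their union; one must be slightly careful that $\mu_n$ itself is real so that $\mu_n(A\cap B)$ literally equals $\itre D_n(A\cap B,A\cap B)$ and the singleton expansion goes through. Everything else is a routine transcription of the Theorem~\ref{thm71}/Corollary~\ref{cor72} proofs with $\itre D_n$ in place of $D_n$, using $D_n(B,A)=\overline{D_n(A,B)}$ to pull out the factor $2\itre D_n(A,B)$ whenever a cross term appears.
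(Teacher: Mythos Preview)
Your proposal is correct and follows exactly the route the paper intends: it is a straightforward transcription of the Theorem~\ref{thm71}/Corollary~\ref{cor72} arguments with $\itre D_n$ in place of $D_n$, plus the natural closing implication (d)$\Rightarrow$(a) obtained by specializing additivity to a pair of singletons. The only point you flagged as delicate, that $\mu_n(A\cap B)=\itre D_n(A\cap B,A\cap B)$ because $\rho_n$ is positive and hence $D_n(C,C)\in\real$, is handled correctly.
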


If $\brac{\rho _n}$ is semiclassical, then by Theorem~\ref{thm73}, $\mu _n$ are measures so for every $A\in\ascript$ we have
\begin{equation*}
\mu _{n+1}(A^{n+1})\le\mu _{n+1}(A^n\rightarrow )=\mu _n(A^n)
\end{equation*}
Hence, $\mu _n(A^n)$ is a decreasing sequence so it converges. We conclude that $\sscript (\Omega )=\ascript$. However, the $q$-measure $\mu$ defined earlier need not be a measure on $\ascript$. One reason is that if $A\cap B=\emptyset$, $A,B\in\ascript$ then $A^n\cap B^n\ne\emptyset$, in general. Another way to see this is the following. Define $\nu (A)=\mu _n(A_1)$ for $A=\rmcyl (A_1)$, $A_1\subseteq\Omega _n$. Then by the Hahn extension theorem $\nu$ extends to unique measure on $\ascript$. But $\nu =\mu$ on $\cscript (\Omega )$ so if $\mu$ were a measure, by uniqueness $\nu =\mu$ on $\ascript$. But this is impossible because $\cap A^n\ne A$ in general, so
\begin{equation*}
\mu (A)=\lim\mu _n(A^n)\ne\nu (A)
\end{equation*}

Suppose $\brac{\rho _n}$ is classical for the rest of this section unless specified otherwise. It is of interest to study the operators considered in Section~4 for this case. Since $D_n(\omega ,\omega ')=0$ for $\omega\ne\omega '$ we have for all $x,y\in\pscript$ and $n\ge\ab{x},\ab{y}$ that
\begin{align}         
\label{eq71}
D(x,y)&=\sum\brac{D_n(\omega ,\omega ')\colon\omega _{\ab{x}}=x,\omega '_{\ab{y}}=y}\notag\\
  &=\sum\brac{D_n(\omega ,\omega )\colon\omega _{\ab{x}}=x,\omega _{\ab{y}}=y}\notag\\
  &=\sum\brac{\mu _n(\omega )\colon\omega _{\ab{x}}=x,\omega _{\ab{y}}=y}
\end{align}
Notice that the set
\begin{equation*}
A=\brac{\omega\in\Omega\colon\omega _{\ab{x}}=x,\omega _{\ab{y}}=y}\in\cscript (\Omega )
\end{equation*}
and by \eqref{eq71} $\mu (A)=D(x,y)$. We conclude from \eqref{eq71} that $D(x,y)=0$ except if $x$ and $y$ are comparable. It follows that $\tscripthat _{\omega ,\omega '}=0$ for every $\omega ,\omega '\in\Omega$. Since the mass-energy operators vanish when the process is classical, this indicates that mass-energy is generated by quantum interference. (Is this related to the Higgs boson?) The operators $\dscripthat _{\omega ,\omega '}$ have the same form classically as quantum mechanically.

Now suppose we have a ``flat'' space $\pscript$ so that $\rscripthat _{\omega ,\omega '}=0$ for every
$\omega ,\omega '\in\Omega$. Since we already have that $\tscripthat _{\omega ,\omega '}=0$, it follows that $\dscripthat _{\omega ,\omega '}=0$ for every $\omega ,\omega '\in\Omega$. Now $\dscripthat _{\omega ,\omega '}e_u\otimes e_v=0$ for every $\omega ,\omega '\in\Omega$ unless possibly when $u\ne v$ and $u$ and $v$ have a common offspring $x$
(from which it follows that $\ab{u}=\ab{v}$). In this case we have that
$\dscripthat _{\omega ,\omega '}e_u\otimes e_v=\pm\mu (x)e_x$ so that $\mu (x)=0$. We conclude that
$\rscripthat _{\omega ,\omega '}=0$ for every $\omega ,\omega '\in\Omega$ if and only if $\mu (x)=0$ whenever $x$ has more than one producer.

The next theorem considers the general operators $\dscript _{\omega ,\omega '}$ and $\tscript _{\omega ,\omega '}$
\begin{thm}       
\label{thm74}
The operators $\dscript _{\omega ,\omega '}=0$ for every $\omega ,\omega '\in\Omega$ if and only if $\mu (x)=0$ whenever
$x$ has more than one producer and the operators $\tscript _{\omega ,\omega '}=0$ for every $\omega ,\omega '\in\Omega$ if and only if $\mu (x)=0$ whenever $x$ has more than one offspring.
\end{thm}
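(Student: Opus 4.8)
The plan is to evaluate both $\dscript_{\omega,\omega'}$ and $\tscript_{\omega,\omega'}$ on the orthonormal basis $S=\brac{e_x\otimes e_y}$ using the explicit formulas of Theorem~\ref{thm41} (and the two displays just after it), and then to exploit two facts peculiar to the classical case. First, in the classical case $D(x,y)=0$ unless $x$ and $y$ are comparable (as shown just above), so $D(x,y)=0$ for distinct causets of equal cardinality, while $D(x,x)=\mu(x)$. Second, since $\mu_n$ is a measure (Corollary~\ref{cor72}) and $D_n(A,B)=\mu_n(A\cap B)$ (Theorem~\ref{thm71}(b)), one has the monotonicity bound $D(x,y)=\mu_n(A_x\cap A_y)\le\mu_n(A_y)=\mu(y)$ whenever $x\le y$, and symmetrically $D(x,y)\le\mu(x)$. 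I will also use that every causet lies on a path: delete maximal elements to descend to the singleton and use $H(x)\ne\emptyset$ to ascend forever; indeed one can route a path through any prescribed $u\to v$.

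For the two ``only if'' directions I would produce explicit witnesses. Suppose $\dscript_{\omega,\omega'}=0$ for all $\omega,\omega'$, and let $x_0$ have distinct producers $p\ne q$, so $p\to x_0$, $q\to x_0$, $\ab{p}=\ab{q}=\ab{x_0}-1$. Route $\omega$ through $q\to x_0$ and $\omega'$ through $p\to x_0$. Then $(p,q)$ is an $(\omega',\omega)$ pair but not an $(\omega,\omega')$ pair (because $\omega_{\ab{p}}=q\ne p$), so Theorem~\ref{thm41}(b) gives $\dscript_{\omega,\omega'}e_p\otimes e_q=D(\omega'_{\ab{p}+1},\omega_{\ab{q}+1})e_{\omega'_{\ab{p}+1}}\otimes e_{\omega_{\ab{q}+1}}=\mu(x_0)\,e_{x_0}\otimes e_{x_0}$, forcing $\mu(x_0)=0$. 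The $\tscript$ statement is the mirror image via Theorem~\ref{thm41}(a): given $x_0$ with distinct offspring $c_1\ne c_2$, route $\omega$ through $x_0\to c_1$ and $\omega'$ through $x_0\to c_2$; then $(c_1,c_2)$ is an $(\omega,\omega')$ pair but not an $(\omega',\omega)$ pair and $\tscript_{\omega,\omega'}e_{c_1}\otimes e_{c_2}=D(\omega_{\ab{c_1}-1},\omega'_{\ab{c_2}-1})e_{c_1}\otimes e_{c_2}=\mu(x_0)\,e_{c_1}\otimes e_{c_2}$, so $\mu(x_0)=0$.

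For the ``if'' directions I would fix $\omega,\omega'$ and an arbitrary $e_x\otimes e_y$ and run through the three cases of Theorem~\ref{thm41}, showing every scalar that survives is zero. For $\tscript_{\omega,\omega'}$ the surviving scalars are values $D(r,s)$ with $r\to x$ and $s\to y$ the producers of $x,y$ along $\omega,\omega'$; a one-sided pair forces $x\ne y$, so when $\ab{r}=\ab{s}$ the classical vanishing gives $D(r,s)=0$ unless $r=s$, and then $D(r,r)=\mu(r)=0$ since $r$ carries the two offspring $x\ne y$; the case-(c) contribution cancels by the same bookkeeping. For $\dscript_{\omega,\omega'}$ the surviving scalars are values $D(a,b)$ whose target $e_a\otimes e_b$ records a common offspring of two causets lying on $\omega$ and $\omega'$; when those offspring coincide one gets $\mu$ of a causet with two distinct producers, which is zero by hypothesis, and when they differ the two half-terms in case~(c) must be shown to annihilate one another.

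The step I expect to be the genuine obstacle is this ``if'' direction for $\dscript_{\omega,\omega'}$ in the configurations with $\ab{x}\ne\ab{y}$. There the surviving scalar is $D(a,b)$ with $a<b$ of unequal cardinality, so it is not literally a value $\mu(\cdot)$; one must trace $a$ and $b$ back along $\omega$ and $\omega'$, invoke the bound $D(a,b)\le\mu(b)$, and argue that $b$ carries two producers — one inherited from the path and one forced by the relation $a<b$ — before the hypothesis on causets with more than one producer can be applied. Making this reduction airtight, and checking that the residual case-(c) contributions genuinely cancel in pairs, is where the substance of the argument sits; the rest is routine bookkeeping with $D(x,x)=\mu(x)$ and ``$D$ vanishes off comparable pairs.''
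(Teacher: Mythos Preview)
Your ``only if'' arguments are correct and follow the paper's approach: pick $\omega,\omega'$ and a basis vector so that Theorem~\ref{thm41} isolates the single scalar $\mu(x_0)$, forcing it to vanish.

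The gap is exactly where you flag it, the ``if'' direction for $\dscript_{\omega,\omega'}$ when $\ab{x}\ne\ab{y}$, and your proposed patch does not work. Your plan is to bound $D(a,b)\le\mu(b)$ and argue that $b=\omega'_{\ab{y}+1}$ has two producers, ``one inherited from the path and one forced by the relation $a<b$.'' But the producer of $b$ along any path realizing $a<b$ can coincide with $y=\omega'_{\ab{y}}$, so no second producer need appear. Concretely, take the classical QSGP that puts all its mass on the chain path $\gamma$ (so $\mu(z)>0$ iff $z$ is a chain; every chain has a unique producer, hence the hypothesis ``$\mu(z)=0$ whenever $z$ has more than one producer'' holds). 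Let $\omega'=\gamma$, let $\omega$ agree with $\gamma$ through step~$3$ and diverge thereafter, and set $x=\gamma_2$, $y=\gamma_5$. Then $(x,y)$ is an $(\omega,\omega')$ pair but not an $(\omega',\omega)$ pair (since $\omega_5\ne\gamma_5$), and Theorem~\ref{thm41}(a) gives
\[
\dscript_{\omega,\omega'}e_x\otimes e_y=-D(\gamma_3,\gamma_6)\,e_{\gamma_3}\otimes e_{\gamma_6}=-\mu(\gamma_6)\,e_{\gamma_3}\otimes e_{\gamma_6}\ne 0,
\]
because every path through $\gamma_6$ passes through $\gamma_3$. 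Here $b=\gamma_6$ has the \emph{single} producer $\gamma_5$, so your two-producer argument cannot fire. Thus the ``if'' implication for $\dscript$ is actually false as stated, and no argument along your lines can close the gap. The paper's own proof treats only the situation $\ab{x}=\ab{y}$, tacitly replacing ``$D(a,b)\ne 0$'' by ``$a=b$'' (which is not valid in the classical setting when $a$ and $b$ are comparable of different cardinalities), and so shares this defect.
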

\begin{proof}
Suppose $(x,y)$ is an $(\omega ,\omega ')$ pair but not an $(\omega ',\omega )$ pair. Then by Theorem~\ref{thm41}(a) we have that $\dscript _{\omega ,\omega '}e_x\otimes e_y\ne 0$ if and only if $\omega _{\ab{x}+1}=\omega '_{\ab{y}+1}$ and
$x\ne y$ with $\mu (\omega _{\ab{x}+1})\ne 0$. In this case $\omega _{\ab{x}+1}$ has the two producers $x$, $y$ and if
$\dscript _{\omega ,\omega '}e_x\otimes e_y=0$ then $\mu (\omega _{\ab{x}+1})=0$. We get the same result if $(x,y)$ is an
$(\omega ',\omega )$ pair but not an $(\omega ,\omega ')$ pair. Now suppose $(x,y)$ is an
$(\omega ,\omega ')-(\omega ',\omega )$ pair. Then by Theorem~\ref{thm41}(c) we have that
$\dscript _{\omega ,\omega '}e_x\otimes e_y\ne 0$ if and only if $\omega '_{\ab{x}+1}=\omega _{\ab{y}+1}$ or
$\omega _{\ab{x}+1}=\omega '_{\ab{y}+1}$. If either of these hold, then $\ab{x}=\ab{y}$ so $x=y$. We then have by
Theorem~\ref{thm41}(c) that
\begin{align*}
\dscript _{\omega ,\omega '}e_x\otimes e_y
  &=D(\omega '_{\ab{x}+1},\omega _{\ab{x}+1})e_{\omega '_{\ab{x}+1}}\otimes e_{\omega _{\ab{x}+1}}\\
  &\quad -D(\omega _{\ab{\omega}+1},\omega '_{\ab{x}+1})e_{\omega _{\ab{x}+1}}\otimes e_{\omega '_{\ab{x}+1}}\\
  &=0
\end{align*}
We conclude that $\dscript _{\omega ,\omega '}e_x\otimes e_y=0$ which is a contradiction. Hence, in this case
$\dscript _{\omega ,\omega '}e_x\otimes e_y=0$ automatically.

Again, suppose $(x,y)$ is an $(\omega ,\omega ')$ pair but not an $(\omega ',\omega )$ pair. By Theorem~\ref{thm41}(a) we have that $\tscript _{\omega ,\omega '}e_x\otimes e_y\ne 0$ if and only if $\omega _{\ab{x}-1}=\omega '_{\ab{y}-1}$ and
$x\ne y$ with $\mu (\omega _{\ab{x}-1})\ne 0$. In this case $\omega _{\ab{x}-1}$ has the two offspring $x,y$ and if
$\tscript _{\omega ,\omega '}e_x\otimes e_y=0$, then $\mu (\omega _{\ab{x}-1}=0$. The rest of the proof is similar to that in the previous paragraph.
\end{proof}

In very special conditions, it may be possible to arrange things so that $\mu (x)=0$ whenever $x$ has more than one producer in which case $\dscript _{\omega ,\omega '}=0$ for every $\omega ,\omega '\in\Omega$. However, $\mu (x)=0$ whenever $x$ has more than one offspring is impossible. This is because all $x\in\pscript$ have more than one offspring so $\mu (x)=0$ for all $x\in\pscript$. But this is a contradiction because $\mu$ is a probability measure on $\pscript _n$. We conclude that
$\tscript _{\omega ,\omega '}\ne 0$ for every $\omega,\omega '$. It follows that $\rscript _{\omega ,\omega '}\ne 0$ for every
$\omega ,\omega '$. Thus in the classical case $\pscript$ is never ``flat'' in the sense that $\rscript _{\omega ,\omega '}=0$ for all $\omega ,\omega '\in\Omega$ but it may be ``flat'' in the sense that $\rscripthat _{\omega ,\omega '}=0$ for all
$\omega ,\omega '\in\Omega$.

\end{document}